\theoremstyle{plain}
\newtheorem{theorem}{Theorem}[section]
\newtheorem{lemma}[theorem]{Lemma}
\newtheorem{proposition}[theorem]{Proposition}
\newtheorem{corollary}[theorem]{Corollary}
\theoremstyle{definition}
\newtheorem{definition}[theorem]{Definition}
\theoremstyle{remark}
\numberwithin{equation}{section}
\newenvironment{acknowledgement}[1][Acknowledgement
]{\begin{trivlist} \item[\hskip \labelsep {\bfseries
#1}]}{\end{trivlist}}
\begin{document}
\title{Plancherel-P\'{o}lya's Type of Instability \\in Vibration System with Multiple Frozen Arguments }
\author[1]{ Lung-Hui Chen}
\author[2]{Chung-Tsun Shieh}

\affil[1]{\footnotesize General Education Center, Ming Chi University of Technology, New Taipei City, 24301, Taiwan;\newline Email: mr.lunghuichen@gmail.com.}
\affil[2]{\footnotesize Department of Mathematics, Tamkang University, New Taipei City, 25137, Taiwan;\newline Email: ctshieh@mail.tku.edu.tw.}
\maketitle
\begin{abstract}
We discuss the problem of the inverse spectral problem of Sturm-Liouville operator 
with multiple frozen arguments at $\{a_{1}, a_{2},\ldots,a_{N}\}$ in $(0,\pi)$. One may consider the characteristic functions as  perturbation of sine or of cosine functions depending on the boundary problem prescribed. However, such perturbation is represented in the form of Fourier transform of certain function which may or may not bring in Riesz basis theory and classical perturbation theory in functional analysis. We shall demonstrate the spectral perturbation in Plancherel-P\'{o}lya's type of inequality and connect to perturbation of related potential functions in $L^{2}$-functional norm. 
\\MSC: 34A55/34K29/65L03.
\\Keywords:  Sturm--Liouville operator; frozen argument; inverse spectral problem; sine type function; instability; interpolation theory.
\end{abstract}
\section{Introduction}
In this paper, we study the instability problem on recovering the real-valued potential function $q(x)\in L^{2}(0,\pi)$ from the spectrum of boundary value problem
$$L=L(q(x),a_{1}, a_{2},\ldots,a_{N},\alpha,\beta)$$ 
in form of
\begin{eqnarray}
\left\{%
\begin{array}{ll}\label{1.1}
ly:=-y''(x)+q(x)\sum_{i=1}^{N} y(a_{i})=\lambda y(x),\,0<x<\pi;\vspace{10pt}\\
y^{(\alpha)}(0)=y^{(\beta)}(\pi)=0,
\end{array}%
\right.
\end{eqnarray}
in which $\lambda=\rho^{2},\,\rho\in\mathbb{C},$ is the spectral parameter, and $\alpha$, $\beta\in\{0,1\}$. In system~(\ref{1.1}), we assume that $0<a_{1}<a_{2}<\cdots<a_{N}<\pi$. The operator $l$ is called the Sturm-Liouville type of operator with $N$ frozen arguments that is compatible with either Dirichlet or Neumann boundary condition on the end points of $[0,\pi]$.
\par
In mathematical or engineering literature, the system~(\ref{1.1}) is called one kind of loaded differential equations. The system simulates various experiments in applied physics, mathematical physics, and many engineering projects focusing on the control and measurement of certain vibrations, magnetic shielding in certain oscillatory systems \cite{Dikinov,Iskenderov,Lomov,Nakhushev}. In oscillatory systems, one often needs to place certain measurement instruments or sensors to monitor the on-site physical states of the vibration systems which justifies the physical senses of frozen points in~(\ref{1.1}). 
\par
The inverse spectral problems for Sturm-Liouville operators with frozen arguments were previously studied in \cite{Alb,Alb1,Bon,Buterin1,Buterin2,Buterin3,Buterin4,Buterin,D,K1,K2,K3,Shieh} to represent a limited selection, and is highly-pursued by many mathematicians and engineers. Among the rapidly growing researches as aforementioned, there are concerns on the local stability problem about how the spectral data of system~(\ref{1.1}) is related to the potential function $q(x)$ or the other classes of information provided by the characteristic functions. Let us refer to \cite{Buterin1,Buterin2,Buterin3,Buterin4,K2,K3} to list merely a few recent results. The problem seems to be ill-posed on many respects of the system~(\ref{1.1}), so we investigate the instability of problem~(\ref{1.1}) via the point of view of interpolation theory in Fourier analysis, and Plancherel-P\'{o}lya's type of inequality of the class of entire function of finite type plays a role.

\section{Preliminaries}
We begin by referencing several identities from \cite{Buterin1,Buterin2,Shieh}, specifically their identities (25), (26), (27), and (28) in Shieh and Tsai \cite{Shieh}. The eigenvalues of differential system~(\ref{1.1}) are the zeros of the characteristic functions referring to \cite{K1,Shieh}. Here are the characteristic functions all we need for system~(\ref{1.1}) for $\alpha,\beta\in\{0,1\}$:
\begin{eqnarray}\hspace{-6pt}
\Delta_N^{(0,0)}(\lambda) = \begin{vmatrix}\label{211}
\frac{\sin \rho a_1}{\rho} & \hspace{5pt}\int_0^{a_1} q(t) \frac{\sin \rho (a_1 - t)}{\rho} dt-1 & 1 & 1 & 1 & \dots & 1 & 1 \\
\frac{\sin \rho a_2}{\rho} & \int_0^{a_2} q(t) \frac{\sin \rho (a_2 - t)}{\rho} dt & -1 & 0 & 0 & \dots & 0 & 0 \\
\frac{\sin \rho a_3}{\rho} & \int_0^{a_3} q(t) \frac{\sin \rho (a_3 - t)}{\rho} dt & 0 & -1 & 0 & \dots & 0 & 0 \\
\vdots & \vdots & \vdots & \vdots & \vdots & \ddots & \vdots & \vdots \\
\frac{\sin \rho a_N}{\rho} & \int_0^{a_N} q(t) \frac{\sin \rho (a_N - t)}{\rho} dt & 0 & 0 & 0 & \dots & 0 & -1 \\
\frac{\sin \rho \pi}{\rho} & \int_0^{\pi} q(t) \frac{\sin \rho (\pi - t)}{\rho} dt & 0 & 0 & 0 & \dots & 0 & 0
\end{vmatrix},
\end{eqnarray}
\begin{eqnarray}\hspace{-5pt}
\Delta_N^{(0,1)}(\lambda) = \begin{vmatrix}
\frac{\sin \rho a_1}{\rho} & \hspace{3pt}\int_0^{a_1} q(t) \frac{\sin \rho (a_1 - t)}{\rho} dt - 1 & 1 & 1 & 1 & \dots & 1 & 1 \\
\frac{\sin \rho a_2}{\rho} & \int_0^{a_2} q(t) \frac{\sin \rho (a_2 - t)}{\rho} dt & -1 & 0 & 0 & \dots & 0 & 0 \\
\frac{\sin \rho a_3}{\rho} & \int_0^{a_3} q(t) \frac{\sin \rho (a_3 - t)}{\rho} dt & 0 & -1 & 0 & \dots & 0 & 0 \\
\vdots & \vdots & \vdots & \vdots & \vdots & \ddots & \vdots & \vdots \\
\frac{\sin \rho a_N}{\rho} & \int_0^{a_N} q(t) \frac{\sin \rho (a_N - t)}{\rho} dt & 0 & 0 & 0 & \dots & 0 & -1 \\
\cos \rho \pi & \int_0^{\pi} q(t)\cos \rho (\pi - t)dt & 0 & 0 & 0 & \dots & 0 & 0
\end{vmatrix},
\end{eqnarray}
\begin{eqnarray}\hspace{-4pt}
\Delta_N^{(1,0)}(\lambda) = \begin{vmatrix}
\cos \rho a_1& \hspace{-2pt}\int_0^{a_1} q(t) \frac{\sin \rho (a_1 - t)}{\rho} dt - 1 & 1 & 1 & 1 & \dots & 1 & 1 \\
\cos \rho a_2& \int_0^{a_2} q(t) \frac{\sin \rho (a_2 - t)}{\rho} dt & -1 & 0 & 0 & \dots & 0 & 0 \\
\cos \rho a_3 & \int_0^{a_3} q(t) \frac{\sin \rho (a_3 - t)}{\rho} dt & 0 & -1 & 0 & \dots & 0 & 0 \\
\vdots & \vdots & \vdots & \vdots & \vdots & \ddots & \vdots & \vdots \\
\cos \rho a_N& \int_0^{a_N} q(t) \frac{\sin \rho (a_N - t)}{\rho} dt & 0 & 0 & 0 & \dots & 0 & -1 \\
\cos \rho \pi& \int_0^{\pi} q(t) \frac{\sin \rho (\pi - t)}{\rho} dt & 0 & 0 & 0 & \dots & 0 & 0
\end{vmatrix},\label{224}
\end{eqnarray}
and
\begin{eqnarray}\hspace{-1pt}
\Delta_N^{(1,1)}(\lambda) = \hspace{-2pt}\begin{vmatrix}
\cos \rho a_1 &\hspace{-4pt} \int_0^{a_1} q(t) \frac{\sin \rho (a_1 - t)}{\rho} dt - \hspace{-2pt}1 & 1 & 1 & 1 & \hspace{-2pt}\dots & 1 & 1 \\
\cos \rho a_2 & \int_0^{a_2} q(t) \frac{\sin \rho (a_2 - t)}{\rho} dt & -1 & 0 & 0 & \hspace{-2pt}\dots & 0 & 0 \\
\cos \rho a_3 & \int_0^{a_3} q(t) \frac{\sin \rho (a_3 - t)}{\rho} dt & 0 & \hspace{-2pt}-1 & 0 &\hspace{-3pt} \dots & 0 & 0 \\
\vdots & \vdots & \vdots & \vdots & \vdots & \hspace{-3pt}\ddots & \vdots & \vdots \\
\cos \rho a_N & \int_0^{a_N} q(t) \frac{\sin \rho (a_N - t)}{\rho} dt & 0 & 0 & 0 & \hspace{-3pt}\dots & 0 & -1 \\
-\rho\sin\rho\pi & \hspace{-3pt}\int_0^{\pi} q(t)  \cos \rho (\pi - t) dt & 0 & 0 & 0 & \hspace{-3pt}\dots & 0 & 0
\end{vmatrix}.
\end{eqnarray}
Then, we deduce from~(\ref{211}), that is for the case $\alpha=\beta=0$,
the following integral identity
\begin{eqnarray}\nonumber
\Delta^{(0,0)}_{N}(\lambda)&\vspace{-2pt}=\vspace{-2pt}&\frac{\sin\rho\pi}{\rho}-\frac{\sin\rho\pi}{\rho^{2}}\Big\{\sum_{i=1}^{N}\int_{0}^{a_{i}}\sin{\rho(a_{i}-t)}q(t)dt\Big\}\\&&+\frac{\sum_{i=1}^{N}\sin\rho a_{i}}{\rho^{2}}\int_{0}^{\pi}\sin{\rho(\pi-t)}q(t)dt.
\end{eqnarray}
Since the characteristic function depends on the potential function $q$, we equivalently write
\begin{eqnarray}\nonumber
\Delta^{(0,0)}_{N}(q;\rho)&\vspace{-2pt}=\vspace{-2pt}&\frac{\sin\rho\pi}{\rho}-\frac{\sin\rho\pi}{\rho^{2}}\Big\{\sum_{i=1}^{N}\int_{0}^{a_{i}}\sin{(\rho t)}q(a_{i}-t)dt\Big\}\\&&+\frac{\sum_{i=1}^{N}\sin\rho a_{i}}{\rho^{2}}\int_{0}^{\pi}\sin{(\rho t)}q(\pi-t)dt,\label{662}
\end{eqnarray}
and the other cases have similar expressions. Therefore, one shall only demonstrate the proof for the case $(\alpha,\beta)=(0,0)$. 
\par
Furthermore, let $q^{1},\,q^{2}\in L^{2}(0,\pi)$ be potentials parametrizing the system~(\ref{1.1}) with the corresponding representation function 
\begin{eqnarray}
G^{j}(\rho)&:=&\Delta^{(0,0)}_{N}(q^{\,j};\rho),\,j=1,2,\\
\widehat{q}\,(x)&:=&q^{1}(x)-q^{2}(x),\nonumber
\end{eqnarray}
and
\begin{eqnarray}
\hspace{-2pt}\widehat{G}(\rho)&:=&\Delta^{(0,0)}_{N}(q^{1};\rho)-\Delta^{(0,0)}_{N}(q^{2};\rho)\\\nonumber
&=&-\frac{\sin\rho\pi}{\rho^{2}}\Big\{\sum_{i=1}^{N}\int_{0}^{a_{i}}\sin(\rho t)\,\widehat{q}(a_{i}-t)dt\Big\}\\&&+\frac{\sum_{i=1}^{N}\sin\rho a_{i}}{\rho^{2}}\int_{0}^{\pi}\sin(\rho t)\,\widehat{q}(\pi-t)dt.\label{2.8}
\end{eqnarray}
Let us denote the zero set of~(\ref{2.8}) as 
$$\mathcal{Z }(\widehat{G}):=\{\alpha_{1},\alpha_{2},\ldots\},$$ 
which includes the eigenvalue set of system~(\ref{1.1}) when $(\alpha,\beta)=(0,0)$. We also rephrase~(\ref{2.8}) to be
\begin{eqnarray}\nonumber
\widehat{G}(\rho)&=&
-\frac{\sin\rho\pi}{\rho^{2}}\Im\Big\{\int_{-\pi}^{\pi}e^{-i\rho t} \{\sum_{i=1}^{N}\,\chi_{[0,a_{i}]}(t)\,\widehat{q}(a_{i}-t)\}dt\Big\}\\&&+\frac{\sum_{i=1}^{N}\sin\rho a_{i}}{\rho^{2}}\Im\Big\{\int_{-\pi}^{\pi}e^{-i\rho t}\chi_{[0,\pi]}(t)\,\widehat{q}(\pi-t)dt\Big\},\label{2.88}
\end{eqnarray}
in which the cut-off function on $[a,b]$
\begin{eqnarray}
\chi_{[a,b]}(t):=
\left\{%
\begin{array}{ll}
1,\,\mbox{ for }t\in[a,b];\\
0,\,\mbox{ otherwise}.
\end{array}
\right.
\end{eqnarray}
Now, we apply the Paley-Wiener theorem \cite[p.\,151]{Levin2}, for each function $\psi(t)\in L^{2}(-\pi,\pi)$, there exists the representation
\begin{equation}
F(\rho)=\frac{1}{2\pi}\int_{-\pi}^{\pi}e^{-i\rho t}\psi(t)dt,\mbox{ where }F(\rho)\in L^{2}_{\pi}(-\infty,\infty),
\end{equation}
which denotes entire functions that is $ L^{2}$-integrable on the real axis and at most of finite type $\pi$. We refer the definitions and details to Appendix.
Therefore, we deduce
\begin{eqnarray}\nonumber
\|\widehat{G}(\rho)\|_{L^{2}(-\infty,\infty)}&\leq&
\|\frac{\sin\rho\pi}{\rho^{2}}\|_{L^{2}(-\infty,\infty)}\|\mathcal{F}(\sum_{i=1}^{N}\chi_{[0,a_{i}]}\widehat{q}(a_{i}-t))\|_{L^{2}(-\infty,\infty)}\\&&\hspace{-5pt}+\|\frac{\sum_{i=1}^{N}\sin\rho a_{i}}{\rho^{2}}\|_{L^{2}(-\infty,\infty)}\|\mathcal{F}(\chi_{[0,\pi]}(t)\widehat{q}(\pi-t))\|_{L^{2}(-\infty,\infty)}.\label{2223}
\end{eqnarray}
Using Parseval's identity to~(\ref{2223}), we obtain
\begin{eqnarray}\nonumber
\hspace{-2pt}\|\widehat{G}(\rho)\|_{L^{2}(-\infty,\infty)}&\leq&
\frac{1}{\sqrt{2\pi}}\,\|\frac{\sin\rho\pi}{\rho^{2}}\|_{L^{2}(-\infty,\infty)}\sum_{i=1}^{N}\|\chi_{[0,a_{i}]}(t)\widehat{q}(a_{i}-t)\|_{L^{2}(-\pi,\pi)}\\&&\hspace{-15pt}+\frac{1}{\sqrt{2\pi}}\,\|\frac{\sum_{i=1}^{N}\sin\rho a_{i}}{\rho^{2}}\|_{L^{2}(-\infty,\infty)}\|\chi_{[0,\pi]}(t)\widehat{q}(\pi-t)\|_{L^{2}(-\pi,\pi)},
\end{eqnarray}
and
\begin{eqnarray}\nonumber
\|\widehat{G}(\rho)\|_{L^{2}(-\infty,\infty)}&\leq&
\frac{N}{\sqrt{2\pi}}\,\|\frac{\sin\rho\pi}{\rho^{2}}\|_{L^{2}(-\infty,\infty)}\|\{\hat{c}_{k}\}_{k\in\mathbb{Z}}\|_{l^{2}}\\&&+\frac{1}{\sqrt{2\pi}}\,\|\frac{\sum_{i=1}^{N}\sin\rho a_{i}}{\rho^{2}}\|_{L^{2}(-\infty,\infty)}\|\{\hat{c}_{k}\}_{k\in\mathbb{Z}}\|_{l^{2}},\label{2116}
\end{eqnarray}
in which the set $\{\hat{c}_{k}\}_{k\in\mathbb{Z}}$ are the Fourier coefficients of $\mathcal{F}(\widehat{q}\,)(\rho)$ and $\mathcal{F}(\widehat{q}\,)(k)=\hat{c}_{k}$.
For $\rho\in\mathbb{C}$, we note 
\begin{eqnarray*}
&&\sin{\rho\pi}=O(e^{|\Im\rho|\pi});\\
&&\hspace{-21pt}\sum_{i=1}^{N}\,\sin\rho a_{i}=O(e^{|\Im\rho|a_{N}}),
\end{eqnarray*}
which implies that $\sin{\rho\pi}$ is an entire function of type $\pi$ and $\sum_{i=1}^{N}\sin\rho a_{i}$ is of type $a_{N}$ that are non-trivial entire functions with solid set with zero density. We refer the details to Appendix.
\begin{proposition}[Cartwright]\label{L2.2}
The entire function  $\widehat{G}(\rho)$ is of finite type  and can be represented as
\begin{equation}\label{G}
\widehat{G}(\rho)=c\rho^{m}\lim_{R\rightarrow\infty}\prod_{|\alpha_{k}|\leq R}\Big(1-\frac{\rho}{\alpha_{k}}\Big),\,k\in\mathbb{Z},
\end{equation}
where $\alpha_{k}\in\mathcal{Z}(\widehat{G})$, $c$ is a constant, and $\rho^{2}\widehat{G}(\rho)=o(1)$ for $\Re\rho\rightarrow\pm\infty$.
\end{proposition}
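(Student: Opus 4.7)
The plan is to verify the hypotheses of a Cartwright--Levinson factorization theorem for $\widehat{G}$ by working with the explicit form~(\ref{2.8}): first establish finite exponential type and $o(1/\rho^{2})$ decay on the real axis, then invoke Cartwright's factorization, and finally eliminate the exponential factor using conjugate symmetry.

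For the type bound, observe that $\sin\rho\pi$ has exponential type $\pi$, each $\sin\rho a_{i}$ has type $a_{i}\le a_{N}$, and each integral $\int_{0}^{a_{i}}\sin(\rho t)\widehat{q}(a_{i}-t)\,dt$ is the imaginary part of the Fourier transform of an $L^{2}$ function supported on $[0,a_{i}]$; by the Paley--Wiener theorem already invoked above it extends to an entire function of exponential type at most $a_{i}$, and analogously $\int_{0}^{\pi}\sin(\rho t)\widehat{q}(\pi-t)\,dt$ has type at most $\pi$. The apparent singularity at $\rho=0$ from the $\rho^{-2}$ prefactor is cancelled since both $\sin\rho\pi$ and $\sum_{i}\sin\rho a_{i}$ have simple zeros at $\rho=0$ and the sine transforms vanish there as well, so $\widehat{G}$ is entire of exponential type at most $\pi+a_{N}$. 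On the real axis $\sin\rho\pi$ and $\sum_{i=1}^{N}\sin\rho a_{i}$ are uniformly bounded, while each sine transform of $\widehat{q}\in L^{2}(0,\pi)\subset L^{1}(0,\pi)$ tends to zero as $|\rho|\to\infty$ by the Riemann--Lebesgue lemma; combining a bounded factor with an $o(1)$ factor gives $\rho^{2}\widehat{G}(\rho)=o(1)$ as $\Re\rho\to\pm\infty$. In particular $\widehat{G}$ is bounded on $\mathbb{R}$, so the Cartwright integrability condition $\int_{-\infty}^{\infty}\log^{+}|\widehat{G}(x)|/(1+x^{2})\,dx<\infty$ holds.

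Combining finite exponential type with this integrability places $\widehat{G}$ in the Cartwright class. Since $\widehat{q}$ is real-valued, $\widehat{G}$ satisfies $\overline{\widehat{G}(\bar\rho)}=\widehat{G}(\rho)$, so the zero set $\{\alpha_{k}\}$ is symmetric under complex conjugation. The Cartwright--Levinson factorization theorem (see \cite{Levin2}) then gives
\[
\widehat{G}(\rho)=c\rho^{m}e^{i\tau\rho}\lim_{R\to\infty}\prod_{|\alpha_{k}|\le R}\Big(1-\frac{\rho}{\alpha_{k}}\Big),
\]
and the boundedness on $\mathbb{R}$ together with the conjugate symmetry of the zeros forces $\tau=0$, producing the stated representation.

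The main obstacle I anticipate is the convergence of the principal-value canonical product: the zeros $\{\alpha_{k}\}$ need not satisfy $\sum 1/|\alpha_{k}|<\infty$, so the Hadamard product written without compensating exponential factors is only conditionally convergent. Justifying the limit $\lim_{R\to\infty}\prod_{|\alpha_{k}|\le R}$ requires Cartwright--Levinson zero-density and argument asymptotics, and the verification that no residual $e^{i\tau\rho}$ factor survives is the delicate analytical step of the argument.
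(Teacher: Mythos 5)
Your proposal is correct and follows essentially the same route as the paper: verify from the explicit formula~(\ref{2.8}) that $\widehat{G}$ is entire of exponential type at most $\pi+a_{N}$ and bounded (indeed $o(1/\rho^{2})$) on the real axis, place it in the Cartwright class, apply the factorization~(\ref{CC}) of Theorem~\ref{C}, and then remove the factor $e^{i\kappa\rho}$. The one substantive difference is that last step: the paper deduces $\kappa=0$ from the Riemann--Lebesgue decay $\widehat{G}(\rho)\to0$ as $\rho\to\pm\infty$ on the real axis, which by itself cannot determine $\kappa$ because $|e^{i\kappa\rho}|=1$ there for real $\kappa$; your argument via the reflection symmetry $\overline{\widehat{G}(\bar\rho)}=\widehat{G}(\rho)$ (equivalently, the symmetry of the indicator diagram forced by $\widehat{q}$ being real-valued) is the cleaner and more defensible way to kill the exponential factor. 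Your closing caveat about the conditional convergence of the principal-value product is apt but not an obstacle: that convergence is exactly part of the conclusion of Cartwright's theorem once membership in the class is established, so it need not be re-proved.
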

\begin{proof}
We observe that $\widehat{G}(\rho)$ is bounded over the real axis with non-trivial $q\in L^{2}(0,\pi)$. As a result of Cartwright's theory stated as~(\ref{CC}) in Theorem \ref{C} in Appendix, we begin that
\begin{equation}
\widehat{G}(\rho)=c\,\rho^me^{i\kappa
\rho}\lim_{R\rightarrow\infty}\prod_{|\alpha_{k}|\leq R}\Big(1-\frac{\rho}{\alpha_{k}}\Big),\,\alpha_{k}\in\mathcal{Z}(\widehat{G}),\label{2.10}
\end{equation}
where $c,m,\kappa$ are constants and $\kappa$ is real, and $\alpha_{k}\in\mathcal{Z}(\widehat{G})$.
Now we consider $\rho\rightarrow\pm\infty$ on real axis, we deduce that $$\widehat{G}(\rho)\rightarrow0,$$
using Riemann-Lebesgue Lemma to~(\ref{2.8}), and then deduce that
 the real constant $\kappa=0$. Then, we obtain
$$\widehat{G}(\rho)=c\,\rho^{m}\lim_{R\rightarrow\infty}\prod_{|\alpha_{k}|\leq R}\Big(1-\frac{\rho}{\alpha_{k}}\Big),$$
which is bounded on $\Im\rho=0$. Hence, $m\leq0$.
 Thus, we deduce that
\begin{equation}\label{214}
\widehat{G}(\rho)=c\lim_{R\rightarrow\infty}\prod_{|\alpha_{k}|\leq R}\Big(1-\frac{\rho}{\alpha_{k}}\Big),\,\alpha_{k}\in\mathcal{Z}(\widehat{G}).
\end{equation}
To retrieve the constant $c$, we let $\rho\rightarrow0$ in equation~(\ref{2.8})\begin{eqnarray}\nonumber
\widehat{G}(\rho)&\vspace{-5pt}=\vspace{-5pt}&-\frac{\sin\rho\pi}{\rho}\Big\{\sum_{i=1}^{N}\int_{0}^{a_{i}}\frac{\sin{(\rho t)}}{\rho}\,\widehat{q}(a_{i}-t)dt\Big\}\\&&+\,\frac{\sum_{i=1}^{N}\sin\rho a_{i}}{\rho}\int_{0}^{\pi}\frac{\sin(\rho t)}{\rho} \,\widehat{q}(\pi-t)dt.\label{213}
\end{eqnarray}
Using the formula
\begin{equation}
\int_{0}^{\alpha}u(t)\sin\rho tdt=o(e^{|\Im\rho|\alpha}),\,\alpha\in(0,\infty),
\end{equation}
we deduce from~(\ref{213}) that
\begin{equation}\label{221}
\widehat{G}(\rho)=e^{|\Im\rho|(\pi+a_{N})}o(\frac{1}{\rho^{2}}),\,|\rho|\gg0,\,\rho\in\mathbb{C}.
\end{equation}
The proportion is thus proven.

\end{proof}

\section{Results}
Let us denote by $L^{p}_{\sigma}$, $1\leq p<\infty$, the space of entire functions of exponential type $\leq\sigma$ that belong to the space $L^{p}(-\infty,\infty)$. Let us apply the Plancherel-P\'{o}lya theorem, and refer to \cite[p.\,149]{Levin2} as in Appendix. 
\begin{theorem}\label{T3.2}
Let $q^{\,j}\in L^{2}(0,\pi)$ be the potential parametrizing the system~(\ref{1.1}) with Fourier coefficients $\{c^{\,j}_{k}\}=\mathcal{F}(q^{\,j})(k)$, $j=1,2$, $k\in\mathbb{Z}$, and we write $$\widehat{q}(x)=q^{1}(x)-q^{2}(x).$$
Then, for some $C>0$ and some $h\gg0$, one has
\begin{equation}\nonumber
(h+1)(a_{N}+|\mbox{c.h. supp }\widehat{q}(t)|)\geq -\ln\frac{|C|\|\{\hat{c}_{k}\}_{k\in\mathbb{Z}}\|_{l^{2}}^{2}}{|\widehat{G}(x\pm ih)|^{2}},
\end{equation}
in which $|\mbox{c.h. supp }\widehat{q}(t)|$ means the measure of the convex hull of the effective support of $\widehat{q}(t)$.
\end{theorem}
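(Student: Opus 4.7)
The plan has three ingredients: a sharpened exponential-type estimate for $\widehat G$, the Plancherel--P\'olya theorem, and the bound (\ref{2116}) relating $\|\widehat G\|_{L^{2}(\mathbb R)}$ to $\|\{\hat c_k\}\|_{l^{2}}$.

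First I refine the type estimate implicit in (\ref{221}). In each inner integral of (\ref{2.8}) I substitute $s=a_i-t$ (respectively $s=\pi-t$), so that only $s\in \text{supp}\,\widehat q$ contributes. After an affine translation that recenters the convex hull of $\text{supp}\,\widehat q$ at the origin, Paley--Wiener identifies each inner integral as $e^{-i\rho c_i}$ times an entire function of symmetric exponential type $\tfrac12\,|\text{c.h.\ supp}\,\widehat q|$. A careful indicator-diagram calculation then shows that the translation constants $c_i$ combine with the outer factors $\sin\rho\pi$ and $\sum_{i=1}^N\sin\rho a_i$ so that the total indicator of $\widehat G$ is dominated by $\sigma:=a_N+|\text{c.h.\ supp}\,\widehat q|$, improving the naive $\pi+a_N$ bound and placing $\widehat G$ in $L^{2}_\sigma$.

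Next I invoke the Plancherel--P\'olya theorem cited from \cite[p.\,149]{Levin2} in pointwise form: for any $F\in L^{2}_\sigma$ and any $h>0$,
\begin{equation*}
|F(x\pm ih)|^{2}\leq K\,e^{\sigma(h+1)}\,\|F\|_{L^{2}(\mathbb R)}^{2},
\end{equation*}
where the ``$+1$'' in the exponent absorbs the multiplicative constant $K$ that appears when converting the classical $L^{2}$-line estimate of Plancherel--P\'olya to a pointwise bound. Applying this to $\widehat G$ and then using (\ref{2116}) to replace $\|\widehat G\|_{L^{2}(\mathbb R)}^{2}$ by a constant multiple $K'\|\{\hat c_k\}\|_{l^{2}}^{2}$, one obtains
\begin{equation*}
|\widehat G(x\pm ih)|^{2}\leq |C|\,\|\{\hat c_k\}\|_{l^{2}}^{2}\,e^{(h+1)(a_N+|\text{c.h.\ supp}\,\widehat q|)}
\end{equation*}
with $|C|=KK'$, and taking $-\ln$ of both sides is exactly the claimed inequality.

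The principal obstacle is the first step. The crude estimate (\ref{221}) only yields exponential type $\pi+a_N$, which is strictly larger than $\sigma$ whenever $\text{supp}\,\widehat q$ is a proper subset of $[0,\pi]$. Passing from $\pi+a_N$ down to $\sigma$ requires both the Paley--Wiener analysis on the actual support of $\widehat q$ \emph{and} the delicate observation that the two summands of (\ref{2.8}) interact so that their combined indicator diagram does not exceed $\sigma$ in either vertical direction $\pm i\infty$. Once this sharp type is in hand, the remainder of the proof reduces to a direct invocation of Plancherel--P\'olya together with the $L^{2}$-bound (\ref{2116}).
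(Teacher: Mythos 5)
Your overall architecture coincides with the paper's: bound $\|\widehat G\|_{L^{2}(\mathbb R)}$ by $\|\{\hat c_k\}\|_{l^{2}}$ via (\ref{2116}), apply the pointwise Plancherel--P\'olya inequality (\ref{(4.9)}) with $p=2$ on the line $|y|=h$, insert the exponential type $\sigma=a_N+|\mbox{c.h. supp }\widehat q|$, and take logarithms. Those three steps are carried out exactly as in the paper, and your final manipulation reproducing the stated inequality is correct.

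The divergence, and the gap, is in how you obtain the type bound $\sigma=a_N+|\mbox{c.h. supp }\widehat q|$. The paper does not attempt a term-by-term indicator estimate: it routes this step through the Cartwright factorization of Proposition \ref{L2.2} together with Titchmarsh's theorem (Lemma \ref{Titchmarsh}), i.e.\ it reads off the type from the zero density of $\widehat G$ via the identity $h_{\widehat G}(\theta)=\sigma|\sin\theta|$ in (\ref{(4.6)}) and Definition \ref{D4.7}, with (\ref{221}) serving only to confirm that $\widehat G$ lies in the Cartwright class. Your route --- Paley--Wiener on each recentred inner integral plus ``a careful indicator-diagram calculation'' --- is asserted rather than performed, and it is precisely here that the work lies. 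A term-by-term estimate of (\ref{2.8}) does not give $\sigma$: the first summand carries the factor $\sin\rho\pi$ of type $\pi$ multiplied by inner integrals whose type depends on the location (not just the length) of $\mbox{supp }\widehat q$ relative to the $a_i$, and the second summand similarly contributes up to $a_N+\pi-c$ with $c$ the effective lower limit of $\widehat q$. Reaching $a_N+|\mbox{c.h. supp }\widehat q|$ requires exhibiting the cancellation between the two summands (e.g.\ after the substitutions $s=a_i-t$, $s=\pi-t$ the terms $\sin\rho\pi\sin\rho a_i\int\cos(\rho s)\widehat q(s)\,ds$ cancel pairwise), and your proposal acknowledges this interaction but does not display it. Without that computation the step is a restatement of the claim, not a proof of it. Two smaller points: the paper also invokes Lemma \ref{Levin163} to guarantee $\widehat G(x\pm ih)\neq 0$ for $h\gg 0$, so that the logarithm on the right-hand side is finite --- you omit this; and your ``$+1$'' in the exponent is already present in the paper's form of (\ref{(4.9)}), so it need not be explained as absorbing a constant.
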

\begin{proof}
Let us recall equation~(\ref{2116}) which is rephrased to
\begin{eqnarray}\nonumber
\|\widehat{G}(\rho)\|_{L^{2}(-\infty,\infty)}
&\leq&\{\frac{N}{2\sqrt{\pi}}\,\|\frac{\sin\rho\pi}{\rho^{2}}\|_{L^{2}(-\infty,\infty)}+\frac{1}{2\sqrt{\pi}}\,\|\frac{\sum_{i=1}^{N}\sin\rho a_{i}}{\rho^{2}}\|_{L^{2}(-\infty,\infty)}\}\\&&\times\,\|\{\hat{c}_{k}\}_{k\in\mathbb{Z}}\|_{l^{2}}\nonumber\\
&\leq&C_{\Im\rho}\,\|\{\hat{c}_{k}\}_{k\in\mathbb{Z}}\|_{l^{2}},\label{3.1}
\end{eqnarray}
for some constant $C_{\Im\rho}>0$. In this paper, we assume $\|\{\hat{c}_{k}\}_{k\in\mathbb{Z}}\|_{l^{2}}$ is non-zero and small. Otherwise, there is nothing to show. 
\par

Let us apply inequality~(\ref{(4.9)}) in Appendix to $\widehat{G}(\rho)$ with $p=2$. That is,
\begin{equation}\label{3.2}
|\widehat{G}(x+iy)|^{2}\leq\frac{2}{\pi}e^{\sigma(|y|+1)}\|\widehat{G}(\rho)\|_{L^{2}(-\infty,\infty)}^{2},
\end{equation}
in which we have shown $\widehat{G}(\rho)$ is an entire function of at most type $2\pi$ in~(\ref{221}).  Now we let $|\Im\rho|=|y|= h,$ for some $h\gg0$ in~(\ref{3.2}), and combine with~(\ref{3.1}). Hence,
\begin{equation}
|\widehat{G}(x\pm ih)|^{2}\leq Ce^{\sigma\,(h+1)}\|\{\hat{c}_{k}\}^{}_{k\in\mathbb{Z}}\|_{l^{2}}^{2},
\end{equation}
for some constant $C>0$. Referencing to~(\ref{221})  and Lemma \ref{Titchmarsh}, the type of $\widehat{G}(\rho)$ is at most $$a_{N}+|\mbox{c.h. supp } \widehat{q}(t)|,$$ in which we mean the convex hull of the effective support of $\widehat{q}=q^{1}-q^{2}\not\equiv0$. Using Definition \ref{D4.7} and equation~(\ref{(4.6)}), we deduce that
\begin{equation}
e^{-a_{N}-|\mbox{c.h. supp } \widehat{q}(t)|(h+1)}|\widehat{G}(x\pm ih)|^{2}\leq C\|\{\hat{c}_{k}\}_{k\in\mathbb{Z}}\|_{l^{2}}^{2}.
\end{equation}
Let us take natural logarithm on both sides, so we deduce 
\begin{equation}\nonumber
(h+1)(a_{N}+|\mbox{c.h. supp }\widehat{q}(t)|)\geq -\ln\frac{|C|\|\{\hat{c}_{k}\}_{k\in\mathbb{Z}}\|_{l^{2}}^{2}}{|\widehat{G}(x\pm ih)|^{2}}.
\end{equation}
Due to Lemma \ref{Levin163}, $\widehat{G}(x+iy)$ has no zero for $|y|= h\gg0$.
Hence, one proved the instability when the Fourier coefficients $\{\hat{c}^{j}_{k}\}_{k\in\mathbb{Z}}$, $j=1,2$, are approaching to each other.

\end{proof}
Alternatively, one may state the following instability result when one uses the other spectral data. In applicational sciences, zeros of Fourier transform may play a better role than Fourier coefficients which is called zero-crossing problem in signal analysis.
\begin{corollary}
Let $q^{\,j}\in L^{2}(0,\pi)$ be potential parametrizing the system~(\ref{1.1}) with Fourier transform $\mathcal{F}(q^{\,j})(k)$, $j=1,2$, with zero set $\{\beta^{j}_{k}\}_{k\in\mathbb{Z}}$. Then,
\begin{equation}\nonumber
(h+1)(a_{N}+|\mbox{c.h. supp }\widehat{q}(t)|)\geq -\ln\frac{|C|\|\{\beta^{1}_{k}-\beta^{2}_{k}\}_{k\in\mathbb{Z}}\|_{l^{2}}^{2}}{|\widehat{G}(x\pm ih|^{2}},\mbox{ for some constant }C.
\end{equation}
\end{corollary}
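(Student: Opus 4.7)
The plan is to derive this corollary from Theorem \ref{T3.2} by bounding the $l^2$ norm of the Fourier coefficients $\{\hat{c}_k\}_{k\in\mathbb{Z}}$ in terms of the $l^2$ norm of the zero differences $\{\beta^1_k-\beta^2_k\}_{k\in\mathbb{Z}}$. The bridge is the observation that an entire function in the Cartwright class is determined, up to a leading constant and a factor $\rho^m$, by its zero set; hence a small perturbation of the zeros produces a small perturbation of the function itself.

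First, I would note that each $q^{\,j}$, extended by zero outside $[0,\pi]$, is compactly supported, so by the Paley--Wiener theorem $\mathcal{F}(q^{\,j})(\rho)$ is entire of exponential type at most $\pi$ and lies in $L^2_{\pi}(-\infty,\infty)$. Cartwright's representation, used exactly as in Proposition \ref{L2.2}, supplies the factorisations
$$\mathcal{F}(q^{\,j})(\rho)=c^{\,j}\rho^{m_{j}}\lim_{R\to\infty}\prod_{|\beta^{\,j}_{k}|\leq R}\Bigl(1-\frac{\rho}{\beta^{\,j}_{k}}\Bigr),\quad j=1,2.$$

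Second, after normalising the leading constants, I would estimate $\mathcal{F}(\widehat{q}\,)(\rho)=\mathcal{F}(q^{1})(\rho)-\mathcal{F}(q^{2})(\rho)$ via the logarithmic identity
$$\log\frac{\mathcal{F}(q^{1})(\rho)}{\mathcal{F}(q^{2})(\rho)}=\sum_{k}\Bigl[\log\bigl(1-\rho/\beta^{1}_{k}\bigr)-\log\bigl(1-\rho/\beta^{2}_{k}\bigr)\Bigr],$$
whose terms are, by the mean value theorem, of order $|\beta^{1}_{k}-\beta^{2}_{k}|\cdot|\rho|/(|\beta^{\,j}_{k}|^{2}-|\rho||\beta^{\,j}_{k}|)$. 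Combining a Cauchy--Schwarz summation with the standard density information for zeros of Cartwright-class functions (as recalled in the Appendix) yields, after restriction to the real axis, an $L^{2}$-estimate $\|\mathcal{F}(\widehat{q}\,)\|_{L^{2}(-\infty,\infty)}\leq C'\|\{\beta^{1}_{k}-\beta^{2}_{k}\}_{k\in\mathbb{Z}}\|_{l^{2}}$. Since $\widehat{q}$ is band-limited and $\hat{c}_{k}=\mathcal{F}(\widehat{q}\,)(k)$, Parseval's identity then gives $\|\{\hat{c}_{k}\}_{k\in\mathbb{Z}}\|_{l^{2}}\leq C''\|\{\beta^{1}_{k}-\beta^{2}_{k}\}_{k\in\mathbb{Z}}\|_{l^{2}}$.

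Third, I would substitute this bound directly into the inequality of Theorem \ref{T3.2}, absorbing $C''$ into the constant $|C|$, to obtain the statement. The main obstacle is step two: controlling the canonical-product perturbation uniformly in the regime $|y|=h\gg 0$ used for $\widehat{G}(x\pm ih)$, and in particular controlling the contribution of any small zeros near the origin. Once the density estimate from Cartwright (indeed sine-type) membership is available, this reduces to a careful summation argument, after which the corollary follows mechanically from Theorem \ref{T3.2}.
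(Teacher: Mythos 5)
Your reduction to Theorem \ref{T3.2} is logically in the right direction: since $-\ln$ is decreasing, an \emph{upper} bound $\|\{\hat{c}_{k}\}\|_{l^{2}}\leq C''\|\{\beta^{1}_{k}-\beta^{2}_{k}\}\|_{l^{2}}$ would indeed let you replace the Fourier coefficients by the zero differences inside the logarithm and conclude. However, your route is genuinely different from the paper's. The paper does not touch the Hadamard/Cartwright factorization of $\mathcal{F}(q^{\,j})$ at all here; instead it invokes Levin's interpolation theorem for sine-type functions (Theorem \ref{T164}) to build the interpolating functions $f^{j}(z)=\sum_{k}c^{j}_{k}\,\mathcal{F}(q^{\,j})(z)/\bigl(\mathcal{F}'(q^{\,j})(\beta^{j}_{k})(z-\beta^{j}_{k})\bigr)$, uses the continuity of the map $f\mapsto\{f(\beta_{k})\}$ and the Plancherel--P\'olya inequality, and then applies the mean value theorem to $f^{1}$ along a path from $\beta^{1}_{k}$ to $\beta^{2}_{k}$ to bring the zero differences into play. (Note that the paper's chain of inequalities actually lands on a \emph{lower} bound for $\|\{c^{1}_{k}-c^{2}_{k}\}\|_{l^{2}}$ in terms of $\sum_{k}|\beta^{1}_{k}-\beta^{2}_{k}|^{2}$, so your version is, if anything, more directly aligned with what the corollary's statement needs.)

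The genuine gap is your second step, which you describe as ``a careful summation argument'' but which is in fact the entire difficulty. Two concrete obstacles: (i) the Cartwright representation $\mathcal{F}(q^{\,j})(\rho)=c^{\,j}\rho^{m_{j}}\prod_{k}(1-\rho/\beta^{j}_{k})$ determines the function from its zeros only up to the constants $c^{\,j}$ and $m_{j}$; two potentials with identical zero sets but $c^{1}\neq c^{2}$ give $\mathcal{F}(q^{1})-\mathcal{F}(q^{2})\not\equiv 0$ while $\|\{\beta^{1}_{k}-\beta^{2}_{k}\}\|_{l^{2}}=0$, so no inequality of the form $\|\mathcal{F}(\widehat{q}\,)\|_{L^{2}}\leq C'\|\{\beta^{1}_{k}-\beta^{2}_{k}\}\|_{l^{2}}$ can hold without first pinning down these constants, and ``after normalising the leading constants'' is not a justification. (ii) Your term-by-term mean-value estimate of $\log(1-\rho/\beta^{1}_{k})-\log(1-\rho/\beta^{2}_{k})$ degenerates on the real axis near the zeros themselves: with $\beta_{k}\sim k$ the Cauchy--Schwarz companion sum $\sum_{k}|\rho|^{2}/(|\beta_{k}|^{2}|\beta_{k}-\rho|^{2})$ is not uniformly bounded for real $\rho$ approaching a $\beta^{2}_{k}$, and exponentiating the logarithmic identity back to $\mathcal{F}(q^{1})-\mathcal{F}(q^{2})$ requires control of $\mathcal{F}(q^{2})$ times a quantity that blows up exactly where $\mathcal{F}(q^{2})$ vanishes. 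Making this uniform is precisely the content of the sine-type machinery (zero separation, boundedness of $|F(x+ih)|$ from below on horizontal lines, Theorem \ref{T164}) and of the cited uniform-stability results of Buterin; without importing one of those tools your step two does not close, and the corollary does not yet follow.
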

\begin{proof}
We use Theorem \ref{T164} in Appendix, and obtain
$$\{c^{j}_{k}\}_{k\in\mathbb{Z}}\mapsto f^{j}(z)=\sum_{k\in\mathbb{Z}}c^{j}_{k}\frac{\mathcal{F}(q^{\,j})(z)}{\mathcal{F}'(q^{\,j})(z)(z-\beta^{j}_{k})},\,j=1,2.$$
The inverse mapping is defined by the relation
$$f^{j}\mapsto\{f^{j}(\beta^{j}_{k})\}_{k\in\mathbb{Z}},\,j=1,2,$$
which is continuous.
Therefore, we deduce that
\begin{equation}\label{3111}
\sum_{k\in\mathbb{Z}}|f^{1}(\beta^{1}_{k})-f^{2}(\beta^{2}_{k})|^{2}\leq C\|f^{1}-f^{2}\|^{2}_{L^{2}(-\infty,\infty)}\leq C'\|\{c^{1}_{k}-c^{2}_{k}\}\|^{2}_{\,l^{2}},
\end{equation}
for non-zero constants $C$ and $C'$. Let us apply inequality~(\ref{(4.9)}) in Appendix again to $$f^{1}(x+iy)-f^{2}(x+iy),$$ and combine with~(\ref{3111}). Thus,
\begin{equation}\label{3.12}
|f^{1}(x+iy)-f^{2}(x+iy)|^{2}\leq\frac{2}{\pi}e^{\sigma(|y|+1)}\|f^{1}-f^{2}\|_{L^{2}(-\infty,\infty)}^{2}\leq C''e^{\sigma(|y|+1)}\|\{c^{1}_{k}-c^{2}_{k}\}\|^{2}_{\,l^{2}},
\end{equation}
in which we use~(\ref{3111}), $C''$ is some non-zero constant. The type of the function $\sigma$ for our case is at most $$a_{N}+|\mbox{c.h. supp } \widehat{q}\,|.$$
Then,
\begin{equation}\label{3.13}
|f^{1}(\Re\beta^{2}_{k}+i\Im\beta^{2}_{k})-f^{2}(\Re\beta^{2}_{k}+i\Im\beta^{2}_{k})|^{2}\leq C''e^{\sigma(|\Im\beta^{2}_{k}|+1)}\|\{c^{1}_{k}-c^{2}_{k}\}\|^{2}_{\,l^{2}}.
\end{equation}
Furthermore, by the triangular inequality,
\begin{equation}\label{3388}
\sum_{k\in\mathbb{Z}}|f^{1}(\beta^{1}_{k})-f^{2}(\beta^{2}_{k})|^{2}\geq\sum_{k\in\mathbb{Z}}\big||f^{1}(\beta^{1}_{k})-f^{1}(\beta^{2}_{k})|-|f^{1}(\beta^{2}_{k})-f^{2}(\beta^{2}_{k})|\big|^{2},
\end{equation}
in which $\sum_{k\in\mathbb{Z}}|f^{1}(\beta^{1}_{k})-f^{2}(\beta^{2}_{k})|$ is small for small $\|\{c^{1}_{k}-c^{2}_{k}\}\|^{2}_{\,l^{2}}$ by~(\ref{3111}). Choosing a suitable path from $\beta^{1}_{k}$ to $\beta^{2}_{k}$, we apply the mean value theorem to obtain
\begin{equation}\label{3315}
|f^{1}(\beta^{1}_{k})-f^{1}(\beta^{2}_{k})|=|(f^{1})'(d_{k})||\beta^{1}_{k}-\beta^{2}_{k}|,\mbox{ for some }d_{k},
\end{equation}
which leads to
\begin{equation}
\sum_{k\in\mathbb{Z}}|f^{1}(\beta^{1}_{k})-f^{2}(\beta^{2}_{k})|^{2}\geq\sum_{k\in\mathbb{Z}}\big||(f^{1})'(d_{k})||\beta^{1}_{k}-\beta^{2}_{k}|-|f^{1}(\beta^{2}_{k})-f^{2}(\beta^{2}_{k})|\big|^{2}.
\end{equation}
Inside which, we apply~(\ref{3111}) and~(\ref{3388}) to deduce
\begin{equation}\label{3.17}
C'\|\{c^{1}_{k}-c^{2}_{k}\}\|^{\,2}_{\,l^{2}}\geq \sum_{k\in\mathbb{Z}}\big||(f^{1})'(d_{k})||\beta^{1}_{k}-\beta^{2}_{k}|-C_{3}\|\{c^{1}_{k}-c^{2}_{k}\}\|_{\,l^{2}}\big|^{2},
\end{equation}\nonumber
in which $C_{3}^{2}=C''e^{\sigma(|\Im\beta^{2}_{k}|+1)}$, and we note that $\Im\beta^{2}_{k}$ is bounded for all $k$. Thus, $C_{3}^{2}$ is bounded uniformly. 
Moreover,
\begin{eqnarray}\nonumber
C'\|\{c^{1}_{k}-c^{2}_{k}\}\|^{\,2}_{\,l^{2}}&\geq&\sum_{k\in\mathbb{Z}}|(f^{1})'(d_{k})|^{2}|\beta^{1}_{k}-\beta^{2}_{k}|^{2}+C_{3}^{2}\|\{c^{1}_{k}-c^{2}_{k}\}\|_{\,l^{2}}^{\,2}\\&&-2C_{3}|(f^{1})'(d_{k})|\beta^{1}_{k}-\beta^{2}_{k}|\|\{c^{1}_{k}-c^{2}_{k}\}\|_{\,l^{2}}\nonumber\\\vspace{3pts}
\hspace{-2pt}&\hspace{-2pt}\geq\hspace{-2pt}&\hspace{-2pt}\sum_{k\in\mathbb{Z}}|(f^{1})'(d_{k})|^{2}|\beta^{1}_{k}-\beta^{2}_{k}|^{2}-2C_{3}\|\{c^{1}_{k}-c^{2}_{k}\}\|_{\,l^{2}}\hspace{-2pt}\sum_{k\in\mathbb{Z}}|(f^{1})'(d_{k})||\beta^{1}_{k}-\beta^{2}_{k}|,\nonumber
\end{eqnarray}
that is,
\begin{eqnarray}\nonumber
\|\{c^{1}_{k}-c^{2}_{k}\}\|_{\,l^{2}}&\geq&\frac{\sum_{k\in\mathbb{Z}}|(f^{1})'(d_{k})|^{2}|\beta^{1}_{k}-\beta^{2}_{k}|^{2}}{C'+2C_{3}\sum_{k\in\mathbb{Z}}|(f^{1})'(d_{k})|\,|\beta^{1}_{k}-\beta^{2}_{k}|}\\
&\geq&\frac{C_{4}\sum_{k\in\mathbb{Z}}|\beta^{1}_{k}-\beta^{2}_{k}|^{2}}{C'+2C_{3}\sum_{k\in\mathbb{Z}}|(f^{1})'(d_{k})|\,|\beta^{1}_{k}-\beta^{2}_{k}|},\nonumber
\end{eqnarray}
in which $\sum_{k\in\mathbb{Z}}|(f^{1})'(d_{k})||\beta^{1}_{k}-\beta^{2}_{k}|$ in the denominator converges due to~(\ref{3111}) and~(\ref{3315}). This combines with Theorem \ref{T3.2}, and corollary is thus proven.

\end{proof}

\section{Appendix: Entire Functions}

For reader's convenience,  we include a few mathematical linguistics.
\begin{definition}
Let $F(z)$ be an entire function. Let
\begin{equation}\nonumber
M_F(r):=\max_{|z|=r}|F(z)|.
\end{equation}
An entire function of $F(z)$ is said
to be a function of finite order if there exists a positive
constant $k$ such that the inequality
\begin{equation}\nonumber
M_F(r)<e^{r^k}
\end{equation}
is valid for all sufficiently large values of $r$. The greatest
lower bound of such numbers $k$ is called the order of the entire
function $F(z)$. By the type $\sigma$ of an entire function $F(z)$
of order $\rho$, we mean the greatest lower bound of positive
number $A$ for which asymptotically we have
\begin{equation}\nonumber
M_F(r)<e^{Ar^\rho}.
\end{equation}
That is,
\begin{equation}\nonumber
\sigma_{F}:=\limsup_{r\rightarrow\infty}\frac{\ln M_F(r)}{r^\rho}.
\end{equation}  If $0<\sigma_{F}<\infty$, then we say
$F(z)$ is of normal type or mean type. For $\sigma_{F}=0$, we say $F(z)$ is of minimal type.
\end{definition}
\begin{definition}
If an entire function $F(z)$ is of order one and of normal type,
then we say it is an entire function of exponential type (EFET).
\end{definition}
\begin{definition}\label{D3}
Let $F(z)$ be an analytic function of finite order $\rho$ in the 
angle $[\theta_1,\theta_2]$. We call the following quantity as the
indicator function of function $F(z)$.
\begin{equation}\nonumber
h_F(\theta):=\limsup_{r\rightarrow\infty}\frac{\ln|F(re^{i\theta})|}{r^{\rho}},
\,\theta_1\leq\theta\leq\theta_2.
\end{equation}
\end{definition}
The type of a function is connected to the maximal value of indicator function.
\begin{lemma}[Levin \cite{Levin},\,p.72]\label{L4}
The maximal value of indicator function $h_F(\theta)$ of
$F(z)$ on the interval $\alpha\leq\theta\leq\beta$ is equal to the
type $\sigma_F$ of this function inside the angle $\alpha\leq\arg
z\leq\beta$.
\end{lemma}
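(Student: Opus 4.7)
The plan is to establish the two-sided bound
\[
\sup_{\theta\in[\alpha,\beta]} h_F(\theta) = \sigma_F,
\]
by proving each inequality separately, where $\sigma_F$ denotes the type of $F$ inside the sector $\alpha\le\arg z\le\beta$.

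First I would dispatch the straightforward direction $\sup_{\theta} h_F(\theta)\le\sigma_F$: for every fixed $\theta\in[\alpha,\beta]$, the bound $|F(re^{i\theta})|\le M_F(r)$ is immediate from the definition of $M_F$ as a maximum on $|z|=r$; taking logarithms, dividing by $r^{\rho}$, and passing to $\limsup_{r\to\infty}$ yields $h_F(\theta)\le\sigma_F$ uniformly in $\theta$, which survives taking the supremum.

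Next I would tackle the reverse inequality $\sup_{\theta}h_F(\theta)\ge\sigma_F$. I would select a sequence $r_n\to\infty$ realising the $\limsup$ that defines $\sigma_F$, and for each $n$ pick an angle $\theta_n\in[\alpha,\beta]$ at which the sectorial maximum is attained, so that $|F(r_n e^{i\theta_n})|=M_F(r_n)$. Passing to a subsequence by compactness of $[\alpha,\beta]$ produces a limit angle $\theta^{*}$. The main obstacle emerges here: $h_F(\theta^{*})$ is a $\limsup$ along the single ray $\arg z=\theta^{*}$, whereas our near-extremal points $r_n e^{i\theta_n}$ only approach that ray asymptotically, so one cannot transfer the lower bound $\sigma_F$ along the diagonal sequence directly.

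To close this gap I would appeal to the upper semicontinuity of the indicator function (Levin, Chap.~I), most cleanly accessed through the regularised envelope $h_F^{*}$, which coincides with $h_F$ outside a countable set and is $\rho$-trigonometrically convex. On a thin sub-sector $\{z:|\arg z-\theta^{*}|<\delta\}$ a Phragm\'en--Lindel\"of comparison, using the two bounding rays where $h_F^{*}\le h_F^{*}(\theta^{*})+\varepsilon$, produces the uniform estimate $\ln|F(re^{i\theta})|\le(h_F^{*}(\theta^{*})+\varepsilon)\,r^{\rho}$ for all large $r$ and all $\theta$ in the sub-sector. Feeding the eventually captured tail of $(r_n,\theta_n)$ into this estimate forces $\sigma_F\le h_F^{*}(\theta^{*})+\varepsilon$ for every $\varepsilon>0$, so $\sigma_F\le h_F^{*}(\theta^{*})\le\sup_{\theta}h_F(\theta)$. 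The hardest step is this Phragm\'en--Lindel\"of comparison: it is what allows translating an angle-drifting extremal bound into a fixed-angle indicator bound, and it is the reason the regularised indicator, rather than $h_F$ itself, is the right object to work with.
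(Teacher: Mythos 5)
The paper offers no proof of this lemma beyond the citation to Levin \cite{Levin}, p.~72, so there is nothing internal to compare against; your argument is, in substance, the standard proof from that reference and it is sound. The direction $\sup_{\theta}h_F(\theta)\le\sigma_F$ is immediate, as you say. For the converse, your diagonal sequence plus local Phragm\'en--Lindel\"of comparison is precisely the mechanism by which Levin obtains the uniform estimate $\ln|F(re^{i\theta})|\le(h_F(\theta)+\varepsilon)r^{\rho}$ for all large $r$ and all $\theta$ in the closed interval, from which the lemma follows in one line. Two small remarks. First, the detour through the regularised envelope $h_F^{*}$ is heavier than necessary: a finite trigonometrically convex function on an interval is automatically continuous in the interior, so $h_F^{*}=h_F$ there, and at a possible endpoint $\theta^{*}$ you only need $h_F^{*}(\theta^{*})=\limsup_{\phi\to\theta^{*}}h_F(\phi)\le\sup_{\theta\in[\alpha,\beta]}h_F(\theta)$, so your chain of inequalities does close. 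Second, the Phragm\'en--Lindel\"of comparison on the subsector of half-opening $\delta$ requires $\delta<\pi/(2\rho)$ for the comparison function to be admissible, needs a standard truncation to account for the boundary estimate holding only for $r\ge r_{0}$, and in fact yields the interior bound $(h_F^{*}(\theta^{*})+\varepsilon)/\cos(\rho\delta)$ rather than $h_F^{*}(\theta^{*})+\varepsilon$; since $\cos(\rho\delta)\to1$ as $\delta\to0$ this does not affect the conclusion, but the limit in $\delta$ should be taken explicitly. With those points made precise, the argument is complete.
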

\begin{definition}\label{255}
Let $f(z)$ be an integral function of order $1$, and let
$N(f,\alpha,\beta,r)$ denote the number of the zeros of $f(z)$
inside the angle $[\alpha,\beta]$ and $|z|\leq r$. We define the
density function as
\begin{equation}\nonumber
\Delta_f(\alpha,\beta):=\lim_{r\rightarrow\infty}\frac{N(f,\alpha,\beta,r)}{r},
\end{equation}
and
\begin{equation}\nonumber
\Delta_f(\beta):=\Delta_f(\alpha_0,\beta),
\end{equation}
with some fixed $\alpha_0\notin E$ such that $E$ is at most a
countable set \cite{Levin,Levin2}. In particular, we denote the density function of $f$ on the open right/left half complex plane as $\Delta^{+}_{f}$/$\Delta^{-}_{f}$ respectively. Similarly, we can define the set density of a zero set $S$. Let $N(
S,r)$ be the number of the discrete elements of $S$ in $\{|z|<r\}$. We define
\begin{equation}\label{SR}
\Delta_S:=\lim_{r\rightarrow\infty}\frac{N(S,r)}{r}.
\end{equation}

\end{definition}
\begin{lemma}\label{L4.6}
Let $f$, $g$ be two entire functions. Then the following two
inequalities hold.
\begin{eqnarray}
&&h_{fg}(\theta)\leq h_{f}(\theta)+h_g(\theta),\mbox{ if one limit exists};\label{119}\\\label{120}
&&h_{f+g}(\theta)\leq\max_\theta\{h_f(\theta),h_g(\theta)\},
\end{eqnarray}
where the equality in~(\ref{119}) holds if one of the functions is of completely regular growth, and secondly, the equality~(\ref{120}) holds if the indicator of the two summands are not equal at some $\theta_0$.
\end{lemma}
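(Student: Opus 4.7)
The plan is to argue directly from the definition $h_F(\theta) = \limsup_{r\to\infty} r^{-\rho}\ln|F(re^{i\theta})|$, treating each ray $\theta$ separately. For the product inequality~(\ref{119}), I would divide the identity $\ln|f(re^{i\theta})g(re^{i\theta})| = \ln|f(re^{i\theta})| + \ln|g(re^{i\theta})|$ by $r^\rho$ and take $\limsup_{r\to\infty}$; subadditivity of $\limsup$ immediately delivers $h_{fg}(\theta) \leq h_f(\theta) + h_g(\theta)$. For the equality claim, I would invoke the hypothesis that one factor, say $f$, has completely regular growth: by definition this means that off an exceptional set of $\theta$ of density zero the limsup in $h_f(\theta)$ is a genuine limit. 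The elementary rule $\limsup(a_r + b_r) = \lim a_r + \limsup b_r$, valid whenever $\lim a_r$ exists, then upgrades subadditivity to equality on that complement.

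For the sum inequality~(\ref{120}), I would take logarithms in $|f+g| \leq 2\max\{|f|,|g|\}$, divide by $r^\rho$, and pass to $\limsup$. The $(\ln 2)/r^\rho$ term is negligible, and the elementary identity
$$\limsup_r \max(a_r, b_r) = \max\bigl(\limsup_r a_r,\,\limsup_r b_r\bigr)$$
gives $h_{f+g}(\theta) \leq \max\{h_f(\theta), h_g(\theta)\}$. For the equality statement, assume without loss of generality that $h_f(\theta_0) > h_g(\theta_0)$ at some ray $\theta_0$, fix $\varepsilon \in (0,\,h_f(\theta_0) - h_g(\theta_0))$, and pick a sequence $r_n \to \infty$ realising the limsup defining $h_f(\theta_0)$. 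Using the bound $r_n^{-\rho}\ln|g(r_n e^{i\theta_0})| \leq h_g(\theta_0) + \varepsilon$ for all large $n$, the strict gap forces $|g|/|f| \to 0$ exponentially along this sequence, so $|f+g| \geq |f|\bigl(1 - o(1)\bigr)$ there. This yields $h_{f+g}(\theta_0) \geq h_f(\theta_0) = \max\{h_f(\theta_0), h_g(\theta_0)\}$, and hence equality.

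The main subtlety I anticipate lies in the equality case of~(\ref{119}): one must keep track of the exceptional set on which the completely regular growth limit for $f$ fails, and state the equality modulo such a set of density zero in the sense of~(\ref{SR}). Everything else reduces to one-line applications of subadditivity of $\limsup$ and the triangle inequality, which is why these bounds are essentially folklore in the theory of entire functions and appear in Levin's treatise as toolkit lemmas.
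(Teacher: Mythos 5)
Your proposal is essentially correct, but it takes a different route from the paper: the paper does not prove this lemma at all, it simply defers to Levin's monograph (pp.~51--52 and 159--160), whereas you give a direct argument from Definition~\ref{D3}. Your treatment of the two inequalities is sound: subadditivity of $\limsup$ applied to $\ln|fg|=\ln|f|+\ln|g|$ gives~(\ref{119}), and $|f+g|\leq 2\max\{|f|,|g|\}$ together with $\limsup_r\max(a_r,b_r)=\max(\limsup_r a_r,\limsup_r b_r)$ gives~(\ref{120}) (reading the $\max_\theta$ in the statement as the pointwise maximum, which is clearly what is intended). Your proof of the equality case of~(\ref{120}) under $h_f(\theta_0)\neq h_g(\theta_0)$ --- extracting a sequence realizing the larger indicator and showing the smaller term is exponentially negligible along it --- is the standard argument and is complete.

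The one place where your sketch is thinner than what Levin actually establishes is the equality case of~(\ref{119}). Completely regular growth gives $r^{-\rho}\ln|f(re^{i\theta})|\to h_f(\theta)$ only for $r$ outside an exceptional set $E$ of zero relative measure, so the rule $\limsup(a_r+b_r)=\lim a_r+\limsup b_r$ does not apply verbatim: the $\limsup$ defining $h_g(\theta)$ could a priori be attained along radii lying inside $E$, and your argument then only yields $h_{fg}(\theta)\geq h_f(\theta)+\limsup_{r\notin E}r^{-\rho}\ln|g(re^{i\theta})|$. Closing this gap requires the additional fact (also in Levin) that for a function of finite order and normal type the indicator is unchanged when the $\limsup$ is taken over the complement of any set of zero relative density. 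You flag this subtlety yourself, which is to your credit, but as written the equality claim in~(\ref{119}) is a sketch rather than a proof; everything else stands.
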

\begin{proof}
 We can find
the details in \cite[p.\,51,\,p.\,52]{Levin}, and the sharpened results are discussed on page 159 and 160.

\end{proof}
\begin{definition}\label{D4.7}
The following quantity is called the width of indicator
diagram of entire function $f$:
\begin{equation}\nonumber
d=h_f(\frac{\pi}{2})+h_f(-\frac{\pi}{2}).
\end{equation}
\end{definition}
\begin{theorem}[Cartwright]\label{C}
Let $f$ be an entire function of exponential type with zero set $\{a_{k}\}$. We assume $f$ satisfies one of the
following conditions:
\begin{equation}\nonumber
\mbox{ the integral
}\int_{-\infty}^\infty\frac{\ln^+|f(x)|}{1+x^2}dx\mbox{ exists}.
\end{equation}
\begin{equation}\nonumber
|f(x)|\mbox{ is bounded on the real axis}.
\end{equation}
Then
\begin{enumerate}
\item all of the zeros of the function $f(z)$, except possibly
those of a set of zero density, lie inside arbitrarily small
angles $|\arg z|<\epsilon$ and $|\arg z-\pi|<\epsilon$, where the
density
\begin{equation}\nonumber
\Delta_f(-\epsilon,\epsilon)=\Delta_f(\pi-\epsilon,\pi+\epsilon)=\lim_{r\rightarrow\infty}
\frac{N(f,-\epsilon,\epsilon,r)}{r}
=\lim_{r\rightarrow\infty}\frac{N(f,\pi-\epsilon,\pi+\epsilon,r)}{r},
\end{equation}
is equal to $\frac{d}{2\pi}$, where $d$ is the width of the
indicator diagram in Definition \ref{D4.7}. Furthermore, the limit
$\delta=\lim_{r\rightarrow\infty}\delta(r)$ exists, where
$$
\delta(r):=\sum_{\{|a_k|<r\}}\frac{1}{a_k};
$$
\item moreover,
\begin{equation}\nonumber
\Delta_f(\epsilon,\pi-\epsilon)=\Delta_f(\pi+\epsilon,-\epsilon)=0;
\end{equation}
\item the function $f(z)$ can be represented in the form
\begin{equation}\label{CC}
f(z)=cz^me^{i\kappa
z}\lim_{r\rightarrow\infty}\prod_{\{|a_k|<r\}}(1-\frac{z}{a_k}),
\end{equation}
where $c,m,\kappa$ are constants and $\kappa$ is real;
\item the indicator
function of $f$ after certain renormalization can be reduced to be the form
\begin{equation}
h_f(\theta)=\sigma|\sin\theta|.\label{(4.6)}
\end{equation}
\end{enumerate}
\end{theorem}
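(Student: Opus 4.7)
The plan is to derive the bound by combining two ingredients already set up in Section 2: the $L^{2}(\mathbb{R})$ estimate on $\widehat{G}(\rho)$ in terms of the Fourier coefficients of $\widehat{q}$, and a Plancherel--P\'olya type pointwise-from-$L^{2}$ estimate for entire functions of finite exponential type. Given the Cartwright-type product representation in Proposition \ref{L2.2}, $\widehat{G}$ already lies in the right class, so the argument becomes: first transfer the $l^{2}$-control of coefficients into an $L^{2}(\mathbb{R})$-control of $\widehat{G}$, then lift it to a pointwise exponential bound on horizontal lines $\Im\rho=\pm h$, then take logarithms.

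Concretely, I would start from inequality~(\ref{2116}), which already reads
\begin{equation*}
\|\widehat{G}(\rho)\|_{L^{2}(-\infty,\infty)}\leq C_{0}\,\|\{\hat{c}_{k}\}_{k\in\mathbb{Z}}\|_{l^{2}},
\end{equation*}
for a constant $C_{0}>0$ depending only on $N$ and on the $L^{2}$-norms of $\sin\rho\pi/\rho^{2}$ and $\sum_{i}\sin\rho a_{i}/\rho^{2}$ on the real axis. Then I would apply the Plancherel--P\'olya inequality~(\ref{(4.9)}) (with $p=2$) from the Appendix to $\widehat{G}$, which upgrades this $L^{2}(\mathbb{R})$-bound to the pointwise bound
\begin{equation*}
|\widehat{G}(x+iy)|^{2}\leq \tfrac{2}{\pi}e^{\sigma(|y|+1)}\|\widehat{G}\|_{L^{2}(-\infty,\infty)}^{2},
\end{equation*}
where $\sigma$ is the exponential type of $\widehat{G}$. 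Setting $y=\pm h$ with $h\gg 0$ and chaining the two inequalities gives
\begin{equation*}
|\widehat{G}(x\pm ih)|^{2}\leq Ce^{\sigma(h+1)}\|\{\hat{c}_{k}\}_{k\in\mathbb{Z}}\|_{l^{2}}^{2}.
\end{equation*}

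The main obstacle, and the whole point of bringing the convex hull of $\operatorname{supp}\widehat{q}$ into the statement, is the sharp identification of the type $\sigma$. A crude reading of~(\ref{221}) only yields $\sigma\leq \pi+a_{N}$, but the cleaner expression~(\ref{2.8}) shows that $\widehat{G}(\rho)$ is, up to polynomial factors, a sum of two products of sine-type functions (of types $\pi$ and $a_{N}$ respectively) with Fourier--sine transforms of $\widehat{q}(a_{i}-\cdot)$ and $\widehat{q}(\pi-\cdot)$. By the Paley--Wiener/Titchmarsh support theorem (the cited Lemma \ref{Titchmarsh}), the type of each such transform is controlled by $|\mbox{c.h. supp }\widehat{q}(t)|$, and then the indicator sub-additivity inequality~(\ref{119}) in Lemma \ref{L4.6} yields $\sigma\leq a_{N}+|\mbox{c.h. supp }\widehat{q}(t)|$; the $\sin\rho\pi$ factor does not spoil this, because the second term in~(\ref{2.8}) dominates the type from below by $a_{N}$ while both integral factors contribute only the support-width. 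One must verify along the way that $\widehat{G}(x\pm ih)\neq 0$ on the horizontal lines, so taking a logarithm is legitimate; this is exactly the content of Lemma \ref{Levin163} invoked at the end of the proof.

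Once $\sigma\leq a_{N}+|\mbox{c.h. supp }\widehat{q}(t)|$ is in hand, substituting into the pointwise bound gives
\begin{equation*}
e^{-(a_{N}+|\mbox{c.h. supp }\widehat{q}(t)|)(h+1)}|\widehat{G}(x\pm ih)|^{2}\leq C\|\{\hat{c}_{k}\}_{k\in\mathbb{Z}}\|_{l^{2}}^{2},
\end{equation*}
and taking natural logarithms on both sides rearranges into the claimed inequality. The structural meaning, which I would emphasize briefly at the end, is that smallness of the Fourier coefficient difference $\{\hat{c}_{k}\}$ forces only a logarithmically slow decay of $|\widehat{G}(x\pm ih)|$, exhibiting the Plancherel--P\'olya-flavored instability of the inverse problem.
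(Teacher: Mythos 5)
Your proposal does not prove the statement in question. The statement is Cartwright's theorem: a classical result about entire functions $f$ of exponential type that are bounded on the real axis (or satisfy the logarithmic integral condition $\int_{-\infty}^{\infty}\frac{\ln^{+}|f(x)|}{1+x^{2}}dx<\infty$), asserting (i) that all but a zero-density set of the zeros $\{a_{k}\}$ cluster in arbitrarily small angles about the real axis with angular density $\frac{d}{2\pi}$, (ii) that the density away from the real axis vanishes, (iii) the product representation $f(z)=cz^{m}e^{i\kappa z}\lim_{r\to\infty}\prod_{|a_{k}|<r}(1-\frac{z}{a_{k}})$ with $\kappa$ real, and (iv) that the indicator reduces to $h_{f}(\theta)=\sigma|\sin\theta|$. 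What you have written instead is a proof sketch of Theorem \ref{T3.2}, the paper's main instability estimate: you start from inequality~(\ref{2116}), apply the Plancherel--P\'olya inequality~(\ref{(4.9)}), identify the type via Lemma \ref{Titchmarsh}, and take logarithms. None of that addresses the zero-density claims, the convergence of $\delta(r)=\sum_{|a_{k}|<r}\frac{1}{a_{k}}$, the genus-one product with the linear exponential factor $e^{i\kappa z}$ and the reality of $\kappa$, or the sinusoidal form of the indicator function. Those are the actual content of the theorem, and they require the machinery of functions of completely regular growth and the Cartwright class (Levin's treatment via the harmonic-measure estimate of $\ln|f|$ in the upper and lower half-planes), not the $L^{2}$ estimates of Section 2.

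For what it is worth, the paper itself does not prove this theorem either; its ``proof'' is a citation to Levin \cite[p.\,126]{Levin}, and the theorem is used as an imported black box (notably in Proposition \ref{L2.2}, where~(\ref{CC}) is invoked to factor $\widehat{G}$). So the honest options here are either to give the Levin-style argument in full, which is a substantial piece of classical function theory, or to do what the paper does and cite it. Your sketch, while a reasonable reconstruction of the proof of Theorem \ref{T3.2}, cannot be spliced in as a proof of Theorem \ref{C}: it presupposes the very product representation~(\ref{CC}) (through Proposition \ref{L2.2}) that Cartwright's theorem is supposed to supply, so using it here would be circular in addition to being off-target.
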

\begin{proof}
We refer the last statement to Levin \cite[p.\,126]{Levin}.

\end{proof}
For reader's convenience, we include the following classical lemma proved by E. C. Titchmarsh \cite[Theorem\,IV]{Titchmarsh} which deals with the zero set of a Fourier transform. A modern version for functions in distributional sense can be found in \cite[Lemma\,1.3]{Tang}.
\begin{definition}
We say that $a$ and $b$ are the effective lower and upper limits of the integral if there is no number $\alpha>a$ such that
$$\int_{a}^{\alpha}|f(t)|dt=0,$$
and no number $\beta<0$ such that
$$\int_{\beta}^{b}|f(t)|dt=0.$$
\end{definition}
The detailed discussion is available in 
\cite[p.\,284]{Titchmarsh}.
\begin{lemma}[Titchmarsh]\label{Titchmarsh}
Let $u\in\mathcal{E}'(\mathbb{R})$, the space of distributions with compact support, then
\begin{equation}N_{\mathcal{F}(u)}(r)= \frac{|\mbox{c.h. supp } u|}{\pi}(r + o(1)),\label{T1}
\end{equation} in which
\begin{equation}
N_{f}(r)=\sum_{|z|\leq r}\frac{1}{2\pi}\oint_{z}\frac{f'(\omega)}{f(\omega)}d\omega,\,z\in\mathbb{C},\label{T2}
\end{equation}
and the phrase $|\mbox{c.h. supp }u|$ means  the convex hull of the effective support of $u$. Moreover, $N_{f}(r)$ is the counting function of the zeros of $f$ inside a ball of radius $r$ in $\mathbb{C}$, and we count the zeros according to their multiplicities.
\end{lemma}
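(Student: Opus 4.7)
The plan is to derive the lemma by combining the Paley--Wiener theorem for compactly supported distributions with Cartwright's density theorem (Theorem \ref{C}) already recorded in the appendix. Write $[a,b]$ for the closed convex hull of the effective support of $u$, so that $|\mbox{c.h. supp } u|=b-a$. The distributional Paley--Wiener theorem gives that $F(z):=\mathcal{F}(u)(z)=\langle u,e^{-izt}\rangle$ is an entire function of exponential type that is polynomially bounded on the real axis; in particular, the Cartwright hypothesis $\int_{\mathbb{R}}\frac{\ln^+|F(x)|}{1+x^{2}}\,dx<\infty$ is automatic.

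First I would pin down the indicator diagram. Since $|e^{-izt}|=e^{\,r t\sin\theta}$ for $z=re^{i\theta}$, the standard seminorm bound for distributions supported in $[a,b]$ yields the upper bounds $h_F(\pi/2)\le b$ and $h_F(-\pi/2)\le -a$. The reverse inequalities are forced by the definition of \emph{effective} endpoints: if, say, $h_F(\pi/2)=b'<b$, the converse direction of Paley--Wiener would place $\mbox{supp}\,u\subset[a,b']$, contradicting that $b$ is the effective upper limit of the support of $u$. Thus the width of the indicator diagram is exactly
\begin{equation*}
d=h_F(\pi/2)+h_F(-\pi/2)=b-a=|\mbox{c.h. supp } u|.
\end{equation*}

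With $d$ identified, Cartwright's theorem (parts (1)--(2)) pins down the zero distribution of $F$: outside arbitrarily narrow sectors about the positive and negative real rays the zeros have density zero, while inside each of those two symmetric sectors the linear density of the zero set is $d/(2\pi)$. Since $N_F(r)$ counts the zeros in $|z|\le r$ with multiplicity, the argument-principle integral \eqref{T2} contributing the multiplicity of each enclosed zero, summing the two sector contributions gives
\begin{equation*}
\lim_{r\to\infty}\frac{N_F(r)}{r}=\frac{d}{\pi}=\frac{|\mbox{c.h. supp } u|}{\pi},
\end{equation*}
which is exactly the asymptotic \eqref{T1}.

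The point where I expect to spend the most care is the \emph{sharpness} of the indicator computation at $\theta=\pm\pi/2$ when $u$ is genuinely a distribution rather than an $L^{1}$-function. Paley--Wiener seminorms give the upper estimates for free, but the matching lower estimates rely on the converse (sharp) direction of Paley--Wiener, which in turn forces the non-trimmability of $[a,b]$ that the ``effective support'' terminology was introduced to capture. A secondary technicality is absorbing the polynomial factor arising from the order of the distribution into the $o(r)$ remainder; once the sharp indicator is in hand, the rest of the argument is a direct appeal to the Cartwright machinery of the appendix.
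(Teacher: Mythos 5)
Your proposal is essentially correct, but note that the paper itself offers no argument for this lemma at all: its ``proof'' is a one-line deferral to \cite{Levin,Tang,Titchmarsh}. What you have written is a reconstruction of the standard proof found in those references (in particular the route of \cite[Lemma 1.3]{Tang}): Paley--Wiener--Schwartz identifies $\mathcal{F}(u)$ as an entire function of exponential type, polynomially bounded on $\mathbb{R}$, hence in the Cartwright class; the indicator values at $\pm\pi/2$ are identified with the endpoints of the effective support; and then Theorem \ref{C} (parts (1)--(2)) converts the indicator-diagram width $d=b-a$ into the zero-counting asymptotic $N_F(r)=\frac{d}{\pi}r+o(r)$ by summing the densities $d/(2\pi)$ over the two sectors hugging the real axis and discarding the zero-density remainder. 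So the comparison is not ``same versus different route'' but ``a proof versus a citation,'' and your outline supplies exactly the machinery the appendix was assembled to support. The one step you rightly flag as delicate --- upgrading the radial statement $h_F(\pi/2)=b'<b$ to the global half-plane bound $|F(x+iy)|\le C(1+|z|)^N e^{b'y}$, $y>0$, needed for the converse direction of Paley--Wiener --- does go through, but only because for Cartwright-class functions the indicator diagram is a vertical segment, so that $h_F(\theta)$ on the whole upper half-plane is controlled by $h_F(\pi/2)$ via Phragm\'en--Lindel\"of; you should cite that fact (Levin) explicitly rather than leave it implicit, since it is precisely where ``effective support'' earns its keep.
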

\begin{proof}
We refer all details to \cite{Levin,Tang,Titchmarsh}.

\end{proof}
Let us denote by $L^{p}_{\sigma}$, $1<p<\infty$, the space of al entire functions of exponential type $\leq\sigma$ that belong to the space $L^{p}(-\infty,\infty)$. \begin{theorem}[Plancherel-P\'{o}lya]
Let us denote by $L^{p}_{\sigma}$, $1\leq p<\infty$, the space of al entire functions of exponential type $\leq\sigma$ that belong to the space $L^{p}(-\infty,\infty)$. Then, the following inequality holds.
\begin{equation}\label{(4.9)}
|f(x+iy)|^{p}\leq\frac{2}{\pi}e^{\sigma(|y|+1)}\|f\|_{L^{p}(-\infty,\infty)}^{p}.
\end{equation}
\end{theorem}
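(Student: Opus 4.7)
The plan is to combine the classical horizontal-line $L^{p}$ form of Plancherel-P\'olya with the subharmonicity of $|f|^{p}$ to extract the pointwise estimate.

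First, I would establish the standard Plancherel-P\'olya inequality on horizontal lines,
\[
\int_{-\infty}^{\infty}|f(x+iy)|^{p}\,dx \;\leq\; e^{p\sigma|y|}\int_{-\infty}^{\infty}|f(x)|^{p}\,dx,
\]
which is the form proved in Levin's treatise. The key is a Phragm\'en-Lindel\"of step applied to $f(z)e^{i\sigma z}$ in the upper half-plane and to $f(z)e^{-i\sigma z}$ in the lower half-plane. Since $f\in L^{p}_{\sigma}$, Lemma~\ref{L4} gives $h_{f}(\theta)\leq\sigma|\sin\theta|$, so after multiplication the auxiliary functions carry nonpositive indicator in the corresponding half-plane, and the $L^{p}$-control on the real axis is enough to force them to be bounded there, which yields the horizontal estimate after undoing the exponential factors.

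Next I would exploit subharmonicity. Because $f$ is entire, $\log|f|$ is subharmonic, hence $|f|^{p}$ is subharmonic for every $p\geq 1$. Applying the sub-mean-value inequality on the closed unit disk $D(z_{0},1)$ centered at $z_{0}=x+iy$,
\[
|f(x+iy)|^{p} \;\leq\; \frac{1}{\pi}\iint_{D(z_{0},1)}|f(\zeta)|^{p}\,dA(\zeta),
\]
and parametrizing $\zeta=(x+u)+i(y+v)$ with $u^{2}+v^{2}\leq 1$, Fubini lets me bound the inner $u$-slice by the full horizontal $L^{p}$-norm at height $y+v$. Step~1 then furnishes the factor $e^{p\sigma|y+v|}\leq e^{p\sigma(|y|+1)}$ using $|v|\leq 1$, the outer $v$-integration contributes a factor at most $2$, and assembling constants produces the $\tfrac{2}{\pi}$ in front to match the stated inequality.

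The main obstacle I anticipate is Step~1 — the Phragm\'en-Lindel\"of reduction — because mere $L^{p}$-integrability on $\mathbb{R}$ does not directly supply the pointwise control the half-plane maximum principle needs. The standard workaround is to regularize $f$ by convolving with an approximate identity of band-limited functions (a Fej\'er-type mollifier), dispose of the easy bounded case by the usual sectorial Phragm\'en-Lindel\"of, and pass to the limit via Fatou. Once the horizontal version is in hand, the subharmonicity-plus-mean-value bookkeeping of Steps~2–3 is routine.
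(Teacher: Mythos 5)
The paper offers no proof of this theorem at all---it simply cites Levin \cite[p.~149]{Levin2}---so your proposal is doing strictly more work, and the route you take is the standard textbook one: first the horizontal-line inequality $\int_{\mathbb{R}}|f(x+iy)|^{p}\,dx\leq e^{p\sigma|y|}\,\|f\|_{L^{p}(-\infty,\infty)}^{p}$, then subharmonicity of $|f|^{p}$ and the sub-mean-value inequality over the unit disk centered at $x+iy$. The bookkeeping in your Steps 2--3 is correct (area $\pi$ of the unit disk gives the $1/\pi$, the $v$-integration gives the $2$, and $|v|\leq 1$ gives the shift from $|y+v|$ to $|y|+1$). You also correctly identify where the real difficulty sits: the horizontal-line estimate cannot be obtained by a naive half-plane maximum principle because $L^{p}$-membership on $\mathbb{R}$ gives no pointwise control, and the mollification/Phragm\'en--Lindel\"of/Fatou scheme you sketch is the accepted way around this. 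That step is only sketched, but the sketch is the right one.

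There is, however, one concrete discrepancy you should not paper over: your own computation delivers $|f(x+iy)|^{p}\leq\frac{2}{\pi}e^{p\sigma(|y|+1)}\|f\|_{L^{p}(-\infty,\infty)}^{p}$, with $p\sigma$ in the exponent, and this does \emph{not} ``match the stated inequality'' (\ref{(4.9)}), which carries only $\sigma$. The mismatch is a defect of the statement, not of your argument: as printed, (\ref{(4.9)}) is false for $p>1$. Take $f(z)=\sin(\sigma z)/z\in L^{2}_{\sigma}$ and $z=iy$ with $y\to+\infty$; then $|f(iy)|^{2}=\sinh^{2}(\sigma y)/y^{2}\sim e^{2\sigma y}/(4y^{2})$, while the right-hand side of (\ref{(4.9)}) with $p=2$ and $\|f\|_{L^{2}}^{2}=\pi\sigma$ grows only like $e^{\sigma y}$, so the ratio of the two sides diverges. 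The correct form has $e^{p\sigma(|y|+1)}$; in the paper's application (the proof of Theorem \ref{T3.2}, where $p=2$) this only rescales the effective type by a factor of $2$ and does not affect the qualitative conclusion. You should state explicitly that your argument proves the $e^{p\sigma(|y|+1)}$ version rather than claiming agreement with the printed inequality.
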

\begin{proof}
We refer the Plancherel-P\'{o}lya theorem to \cite[p.\,149]{Levin2}. 

\end{proof}
\begin{definition} \label{D4.12}
An entire function $F(z)$ is called a sine-type function with the width of indicator diagram $2\sigma$ if
\begin{enumerate}
\item $h_{F}(\frac{\pi}{2})= h_{F}(-\frac{\pi}{2})=\sigma$;
\item All of zeros of $F(z)$ are simple and satisfies the separation condition, that is 
$$\inf_{k\neq n}|\lambda_{k}-\lambda_{n}|=2\delta>0.$$
\item $\sup_{k}{|\Im\lambda_{k}|}=H<\infty$.
\item $0<c<|F(x+ih)|<C<\infty,\,\infty<x<\infty$, for some constants $c$, $C$, and $h$.
\end{enumerate}
\end{definition}
\begin{lemma}\label{Levin163}
The zeros of a sine-type function lie in a horizontal strip. 
\end{lemma}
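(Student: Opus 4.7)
The plan is to read the conclusion of Lemma \ref{Levin163} directly off clause 3 of Definition \ref{D4.12}. That clause stipulates $\sup_k |\Im \lambda_k| = H < \infty$, which is by definition the assertion that every zero $\lambda_k$ lies in the closed horizontal strip $\{z \in \mathbb{C} : |\Im z| \leq H\}$. At this level the lemma is a one-line unpacking of the hypothesis, and I would state it as such.

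To motivate why this is more than a tautology, I would then indicate how clause 3 is in fact a structural consequence of clauses 1 and 4. First I would invoke clause 4 to select a horizontal line $\Im z = h$ on which $|F(x+ih)| \geq c > 0$ uniformly in $x$; by the symmetric version of this hypothesis (applied to $\overline{F(\overline{z})}$), there is a companion line $\Im z = -h$ with the same lower bound, so neither line can contain a zero of $F$. Second, I would combine clause 1 with Lemma \ref{L4}, which identifies the type with the maximum of the indicator, to obtain the vertical ray asymptotics $\log |F(re^{\pm i\pi/2})| = \sigma r + o(r)$.

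With these two ingredients in place, I would run a Phragm\'en--Lindel\"of or Carleman-type argument in each of the half-planes $\{\Im z > h\}$ and $\{\Im z < -h\}$. The idea is to suppose, toward a contradiction, a subsequence $\lambda_{k_n}$ of zeros with $\Im \lambda_{k_n} \to \infty$, apply Jensen's formula on a sequence of discs tangent to the line $\Im z = h$, and exploit the separation condition 2 to prevent zero clustering. The resulting lower bound on the number of zeros in large half-discs would contradict the boundary estimate $|F| \geq c$ on $\Im z = h$ together with the vertical growth $\log|F| \leq \sigma|\Im z| + O(1)$. This provides the uniform upper bound on $\Im \lambda_k$; the lower-half argument is symmetric.

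The main obstacle is the clean bookkeeping in the Carleman or Jensen step: one must track the boundary integral along $\Im z = h$, the contribution from the semicircle at infinity (controlled by the indicator $\sigma|\sin\theta|$ from Theorem \ref{C}), and the sum over zeros (controlled by the separation $\delta$), and extract a finite $H$ depending only on $\sigma, c, h, \delta$. Once such $H$ is produced, every $\lambda_k$ belongs to the strip $|\Im z| \leq H$, proving the lemma in the stronger deductive form that is actually used in the argument following equation~(\ref{3.2}) of Theorem \ref{T3.2}.
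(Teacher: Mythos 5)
Your proposal is correct, and its first paragraph is already a complete proof relative to this paper's conventions: Definition \ref{D4.12} as stated here includes clause 3, $\sup_k|\Im\lambda_k|=H<\infty$, which is verbatim the assertion that the zeros lie in the horizontal strip $|\Im z|\le H$. The paper itself does not argue this at all --- its ``proof'' is a citation to Levin \cite[p.\,163]{Levin2}, where the strip condition is a genuine theorem rather than an axiom, because Levin's definition of a sine-type function does not postulate clause 3. So you have correctly diagnosed that, in the form the paper uses, the lemma is a tautological unpacking of its own definition, and your one-line reading suffices for everything the lemma is used for (the nonvanishing of $\widehat{G}(x\pm ih)$ for $h\gg0$ in Theorem \ref{T3.2}). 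Your second and third paragraphs, which sketch how clause 3 would follow from clauses 1, 2, and 4 via Phragm\'en--Lindel\"of and a Carleman/Jensen count, reconstruct the substantive content of Levin's actual argument; that sketch is in the right spirit but is not carried to completion (you acknowledge the bookkeeping gap yourself), so it should be presented as motivation rather than as the proof of record. Since the tautological reading already closes the lemma, nothing is missing.
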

\begin{proof}
We refer the proof to Levin \cite[p.\,163]{Levin2}.
\end{proof}
\begin{theorem}\label{T164}
Let $F(z)$ be a sine-type function with indicator diagram of width $2\sigma$, and let $\{\beta_{k}\}_{{k\in\mathbb{Z}}}$ its zero set. Then the mapping
$$\{c_{k}\}_{k\in\mathbb{Z}}\mapsto f(z)=\sum_{k\in\mathbb{Z}}c_{k}\frac{F(z)}{F'(\beta_{k})(z-\beta_{k})}$$
is an isomorphism between $l^{p}$ and $L^{p}_{\sigma}$ each $p\in(1,\infty)$. The series on the right-hand side converges in the $L^{p}(-\infty,\infty)$-norm. The inverse mapping is defined by the relation
$$f\mapsto\{f(\beta_{k})\}_{k\in\mathbb{Z}},$$
which is a continuous mapping.
\end{theorem}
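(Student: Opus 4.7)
The plan is to establish the isomorphism in two directions by exploiting each of the four defining properties of a sine-type function in Definition~\ref{D4.12}. First I would verify that each interpolation kernel
$$\phi_{k}(z):=\frac{F(z)}{F'(\beta_{k})(z-\beta_{k})}$$
lies in $L^{p}_{\sigma}$. The exponential-type bound is immediate: $F$ has type $\sigma$ by item~(1) of Definition~\ref{D4.12} together with Lemma~\ref{L4}, and division by a linear factor does not change the indicator. Membership in $L^{p}(\mathbb{R})$ and the reproducing identity $\phi_{k}(\beta_{n})=\delta_{kn}$ follow from the simplicity of the zeros of $F$ (item~2) and the two-sided bound $0<c\leq|F(x+ih)|\leq C<\infty$ on the horizontal line $\Im z=h$ (item~4), combined with the bounded-strip condition $\sup_{k}|\Im\beta_{k}|\leq H$ (item~3).

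Next I would establish the two-sided norm equivalence. The sampling inequality $\|\{f(\beta_{k})\}\|_{\ell^{p}}\leq C\,\|f\|_{L^{p}(\mathbb{R})}$ for $f\in L^{p}_{\sigma}$ is a discrete Plancherel-P\'olya estimate: apply~(\ref{(4.9)}) to pass from the real axis to the strip $|\Im z|\leq H$, then use the separation condition $\inf_{k\neq n}|\beta_{k}-\beta_{n}|\geq 2\delta>0$ to cover the sampling points by disjoint disks on each of which subharmonicity of $|f|^{p}$ dominates the pointwise value by an $L^{p}$ average. The synthesis bound $\|\sum_{k}c_{k}\phi_{k}\|_{L^{p}(\mathbb{R})}\leq C\,\|\{c_{k}\}\|_{\ell^{p}}$ requires rewriting the partial sums as a Cauchy-type integral against $F$, factoring the uniform lower bound on $|F(x+ih)|$ out of the sum, and handling the remaining $1/(z-\beta_{k})$ kernels uniformly in $k$ via the $L^{p}$-boundedness of the Hilbert transform (for $p=2$, Parseval together with the Riesz basis structure of $\{e^{i\beta_{k}t}\}$ suffices).

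Finally I would show that the two maps are mutual inverses. Given $f\in L^{p}_{\sigma}$, set $g(z):=\sum_{k}f(\beta_{k})\phi_{k}(z)$; the interpolation identity from Step~1 forces $f-g$ to vanish on $\{\beta_{k}\}$, so $h(z):=(f(z)-g(z))/F(z)$ is entire. Item~4 of Definition~\ref{D4.12} bounds $F$ from below on $\Im z=\pm h$, while~(\ref{(4.9)}) bounds $f-g$ from above on the same lines; the resulting bound on $h$ over two parallel horizontal lines and its exponential type $\leq 0$ forces $h$ to be constant by Phragm\'en-Lindel\"of, and in fact identically zero because $f,g\in L^{p}(\mathbb{R})$ whereas $1/F$ does not decay. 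The hard part I expect is the synthesis estimate in Step~2: naive summation delivers only $\ell^{1}$-convergence, and the $L^{p}$ norm control for general $p\in(1,\infty)$ cannot be obtained from Parseval as used in Section~2 but genuinely requires a singular-integral bound adapted to the sine-type geometry of $\{\beta_{k}\}$, which is where the full strength of Definition~\ref{D4.12} enters.
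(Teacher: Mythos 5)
The paper does not actually prove this statement: its ``proof'' is a one-line deferral to Levin \cite[p.\,165]{Levin2}, so there is no argument in the text to compare yours against line by line. What you have written is, in essence, a faithful reconstruction of Levin's own proof of the interpolation theorem for sine-type functions: kernels $\phi_k=F(z)/(F'(\beta_k)(z-\beta_k))$, the discrete Plancherel--P\'olya sampling estimate from the separation condition and the strip condition, the synthesis bound via a singular-integral (discrete Hilbert transform) argument, and the uniqueness step via dividing by $F$ and applying Phragm\'en--Lindel\"of. You correctly identify the synthesis estimate as the genuinely hard step and correctly note that Parseval-type arguments only settle $p=2$. Three soft spots worth making explicit if you were to write this out in full: (i) the bound $\|\phi_k\|_{L^p(\mathbb{R})}\lesssim 1$ uniformly in $k$ needs the uniform lower bound $\inf_k|F'(\beta_k)|>0$, which is itself a lemma requiring the separation condition together with item 4 of Definition \ref{D4.12} (a minimum-modulus/Harnack argument), not just simplicity of the zeros; (ii) in the uniqueness step, the claim that $h=(f-g)/F$ is of exponential type requires Lindel\"of's theorem on entire quotients, and the final ``constant is zero'' argument is cleanest on the line $\Im z=h$, where $f-g\in L^p$ but $|F(x+ih)|\geq c>0$ --- your phrasing ``$1/F$ does not decay'' has the roles inverted; (iii) invoking the Riesz basis property of $\{e^{i\beta_k t}\}$ for $p=2$ risks circularity, since that property is usually derived from (or proved in tandem with) this very theorem. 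None of these is a fatal gap; they are the standard technical lemmas Levin supplies, and your outline buys the reader an actual argument where the paper offers only a citation.
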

\begin{proof}
We refer Levin \cite[p.\,165]{Levin2} for detailed introduction.

\end{proof}
\begin{definition}\label{D4.1}
We denote the following quantity as the zero density of an entire function $f$ of finite type:
\begin{equation}
\delta(f):=\lim_{r\rightarrow\infty}\frac{N_{f}(r)}{r}.
\end{equation}
\end{definition}
\begin{lemma}\label{L4.12}
Let $f$, $g$ be two entire functions of finite type with density functions $\delta(f)$ and
$\delta(g)$ respectively. Then, 
\begin{eqnarray*}
&&\delta(fg)=\delta(f)+\delta(g); \\
&&\delta(f+g)=\max\{\delta(f),\delta(g)\},\end{eqnarray*}
if the indicator diagrams of two functions are not equal.
\end{lemma}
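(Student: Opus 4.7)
The plan is to treat the two identities by entirely different methods: the product formula admits a direct combinatorial proof from the zero-set description, whereas the sum formula must be argued indirectly through indicator functions and Cartwright's theorem, since zeros of $f+g$ bear no algebraic relationship to those of $f$ and $g$.

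For $\delta(fg) = \delta(f) + \delta(g)$, I would argue at the level of the counting function. The zero set of $fg$ is the union of the zero sets of $f$ and $g$, with multiplicities adding: if $z_0$ is a zero of $f$ of order $m$ and of $g$ of order $n$, then it is a zero of $fg$ of order $m+n$. This yields $N_{fg}(r) = N_f(r) + N_g(r)$ for every $r>0$, which is also transparent from the contour representation (\ref{T2}) via the identity $(fg)'/(fg) = f'/f + g'/g$. Dividing by $r$ and passing to the limit in Definition \ref{D4.1} gives the additivity at once.

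For $\delta(f+g) = \max\{\delta(f), \delta(g)\}$, the plan is to convert the zero density into the width of the indicator diagram. Under the implicit Cartwright hypothesis on $f$, $g$, and $f+g$, parts (1) and (2) of Theorem \ref{C} tell us that the zeros concentrate in thin sectors around $\arg z = 0$ and $\arg z = \pi$, and the density satisfies $\delta(f) = d_f/\pi$ where $d_f = h_f(\pi/2)+h_f(-\pi/2)$ is the width of the indicator diagram from Definition \ref{D4.7}. I would then invoke Lemma \ref{L4.6}, equation (\ref{120}), which upgrades to equality under the hypothesis that the indicators of the two summands differ. Evaluating at $\theta = \pm\pi/2$ yields $h_{f+g}(\pm\pi/2) = \max\{h_f(\pm\pi/2), h_g(\pm\pi/2)\}$; summing the two and dividing by $\pi$ gives the claimed identity for densities.

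The main obstacle is the pointwise-versus-global nature of the $\max$. If $h_f(\pi/2) > h_g(\pi/2)$ while $h_g(-\pi/2) > h_f(-\pi/2)$, one only obtains $d_{f+g} = h_f(\pi/2)+h_g(-\pi/2)$, which may strictly exceed $\max\{d_f,d_g\}$. Hence the hypothesis "indicator diagrams of two functions are not equal" must be read in the stronger sense that one indicator dominates the other at both rays $\theta=\pm\pi/2$ (equivalently, one of the convex indicator diagrams contains the other). Once this stronger interpretation is adopted, the pointwise equality in (\ref{120}) transfers cleanly to the sum $d_{f+g} = \max\{d_f,d_g\}$, and the density identity follows by dividing by $\pi$.
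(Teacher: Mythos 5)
Your proposal is correct in substance and, unlike the paper, actually supplies an argument: the paper's ``proof'' of this lemma is only a pointer to Levin's book and to Lemma \ref{L4.6}, so your write-up is the natural fleshing-out of that citation rather than a different route. The product half is clean and elementary: $N_{fg}(r)=N_f(r)+N_g(r)$ follows from the logarithmic-derivative identity behind (\ref{T2}) (or from multiplicities adding), and Definition \ref{D4.1} then gives additivity of the densities; note only that equation (\ref{119}) of Lemma \ref{L4.6}, which the paper points to, gives merely an inequality for indicators, so the counting-function argument is in fact the cleaner way to get the exact equality $\delta(fg)=\delta(f)+\delta(g)$. For the sum, your reduction $\delta=d/\pi$ via parts (1)--(2) of Theorem \ref{C} and Definition \ref{D4.7}, followed by the equality case of (\ref{120}) at $\theta=\pm\pi/2$, is the intended mechanism. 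Two caveats you raise are genuine and worth keeping: first, the identity $\delta(f)=d_f/\pi$ requires $f$ to lie in the Cartwright class (bounded on the real axis or satisfying the logarithmic integral condition), which the lemma's hypothesis ``finite type'' does not by itself guarantee --- in the paper's application the functions are Fourier transforms of compactly supported $L^2$ functions, so this is harmless but should be said; second, your observation that the maxima at $\theta=\pi/2$ and $\theta=-\pi/2$ may be attained by different summands is a real defect of the statement as written (take $f$ with $h_f(\pi/2)=\sigma$, $h_f(-\pi/2)=0$ and $g$ with the roles reversed: then $d_{f+g}=2\sigma>\max\{d_f,d_g\}=\sigma$), so the hypothesis ``the indicator diagrams are not equal'' must indeed be read as one indicator dominating the other at both rays, exactly as you propose. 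Your proof is acceptable with those two clarifications made explicit.
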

\begin{proof}
We refer to B. Levin's book \cite[p.\,52]{Levin} for more detailed discussion and Lemma \ref{L4.6}. 
\end{proof}
\begin{lemma}\label{L2.4}
For $f(x)$ in $L^{2}(0,\pi)$, $\rho\in\mathbb{C}$, and $$\int_{0}^{\pi}e^{-i\rho x}f(x)dx=\int_{0}^{\pi}\cos{\rho x}f(x)dx-i\int_{0}^{\pi}\sin{\rho x}f(x)dx,$$ the entire functions $$\int_{0}^{\pi}e^{-i\rho x}f(x)dx,\, \int_{0}^{\pi}\cos{\rho x}f(x)dx,\mbox{ and }\int_{0}^{\pi}\sin{\rho x}f(x)dx$$ have the same zero density in $\mathbb{C}$.
\end{lemma}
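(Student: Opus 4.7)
The plan is to reduce the three-way comparison to a single application of Titchmarsh's zero-counting formula (Lemma \ref{Titchmarsh}) combined with the additive-density identity in Lemma \ref{L4.12}. Writing
\begin{equation*}
E(\rho):=\int_0^\pi e^{-i\rho x}f(x)\,dx,\quad C(\rho):=\int_0^\pi \cos(\rho x)f(x)\,dx,\quad S(\rho):=\int_0^\pi \sin(\rho x)f(x)\,dx,
\end{equation*}
Euler's formula together with the parity of cosine and sine immediately yields
\begin{equation*}
C(\rho)=\tfrac{1}{2}\bigl[E(\rho)+E(-\rho)\bigr],\qquad S(\rho)=\tfrac{i}{2}\bigl[E(\rho)-E(-\rho)\bigr].
\end{equation*}
Since $E(-\rho)$ has the zero set obtained by reflecting that of $E(\rho)$ through the origin, the two entire functions share the same zero-counting asymptotics, so $\delta(E(\rho))=\delta(E(-\rho))$.

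Next I would apply Lemma \ref{Titchmarsh} to $f$, extended by zero to the real line, to obtain $\delta(E)=|\mbox{c.h. supp}\,f|/\pi$. A short indicator computation from $h_E(\theta)=\max_{x\in\mbox{supp}\,f}x\sin\theta$ places the indicator diagram of $E(\rho)$ on a segment of the negative imaginary axis, while that of $E(-\rho)$ is its reflection on the positive imaginary axis, so the two diagrams are distinct whenever $f\not\equiv 0$. Lemma \ref{L4.12} then gives $\delta(C)=\max\{\delta(E(\rho)),\delta(E(-\rho))\}=\delta(E)$; the analogous conclusion for $S$ follows after noting that the scalar factor $i/2$ does not alter the zero set of $E(\rho)-E(-\rho)$.

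The principal technical obstacle I expect is the rigorous application of Lemma \ref{L4.12} in exactly the form needed here: the distinctness of the indicator diagrams of $E(\rho)$ and $E(-\rho)$ must translate into the assertion that the combinations $E(\rho)\pm E(-\rho)$ cannot exhibit wholesale cancellation that lowers the density below $\delta(E)$. Under the nondegeneracy hypothesis $f\not\equiv 0$, the two indicator segments are nontrivial and disjoint in the complex plane, so Lemma \ref{L4.12} applies directly; the remaining case $f\equiv 0$ is vacuous, with all three transforms identically zero.
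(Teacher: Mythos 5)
Your reduction $C(\rho)=\tfrac12[E(\rho)+E(-\rho)]$, $S(\rho)=\tfrac{i}{2}[E(\rho)-E(-\rho)]$ is correct and is essentially the route the paper's own (one-line) proof gestures at, but the decisive step --- invoking Lemma \ref{L4.12} to conclude $\delta\bigl(E(\rho)\pm E(-\rho)\bigr)=\max\{\delta(E(\rho)),\delta(E(-\rho))\}$ --- does not hold. For a Cartwright-class function the zero density equals $\tfrac{1}{\pi}\bigl(h(\tfrac{\pi}{2})+h(-\tfrac{\pi}{2})\bigr)$, the width of the indicator diagram over $\pi$; for a sum one only has the pointwise rule $h_{f+g}(\theta)=\max\{h_f(\theta),h_g(\theta)\}$ where the indicators differ, and taking the maximum separately at $\theta=\pi/2$ and $\theta=-\pi/2$ can yield a diagram strictly wider than either summand's. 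That is exactly what happens here: if $[a,b]\subseteq[0,\pi]$ is the effective support of $f$, then $h_E(\tfrac{\pi}{2})=b$, $h_E(-\tfrac{\pi}{2})=-a$ (width $b-a$), the diagram of $E(-\rho)$ is the reflection (again width $b-a$), but $h_C(\pm\tfrac{\pi}{2})=\max\{b,-a\}=b$, so the width of $C$ is $2b>b-a$. Equivalently, $C(\rho)=\tfrac12\int_{-\pi}^{\pi}e^{-i\rho x}f_e(x)\,dx$ with $f_e$ the even extension of $f$, and Lemma \ref{Titchmarsh} applied to $f_e$ gives $\delta(C)=2b/\pi$, not $(b-a)/\pi$. (The sum clause of Lemma \ref{L4.12} is itself false under the stated hypothesis: $f=e^{i\pi z}$ and $g=-e^{-i\pi z}$ have distinct one-point indicator diagrams and no zeros, yet $f+g=2i\sin\pi z$ has density $2$.)

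The gap is not repairable, because the statement fails. Take $f\equiv1$: then $E(\rho)=(1-e^{-i\pi\rho})/(i\rho)$ vanishes exactly at the nonzero even integers, so $\delta(E)=1$, while $C(\rho)=\sin(\pi\rho)/\rho$ vanishes at all nonzero integers, so $\delta(C)=2$, and $S(\rho)=(1-\cos\pi\rho)/\rho$ has double zeros at the nonzero even integers (plus a simple zero at the origin), so $\delta(S)=2$ counting multiplicity. Equality of the three densities would require $b-a=2b$, i.e.\ $a=-b$, which is impossible for a nonzero $f$ supported in $[0,\pi]$. The paper's own proof --- which cites the same Lemma \ref{L4.12} and asserts that the three transforms share the same ``effective integral support'' --- has the same defect, since the support relevant to the cosine and sine transforms is that of the even, resp.\ odd, extension to $[-\pi,\pi]$; your write-up at least makes the faulty step explicit enough to locate it.
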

\begin{proof}
Let us simply apply Lemma \ref{L4.6}, \ref{Titchmarsh}, Lemma \ref{L4.12}, and Theorem \ref{C}. The point is to see that the zero densities of $$\int_{0}^{\pi}e^{-i\rho x}f(x)dx,\, \int_{0}^{\pi}\cos{\rho x}f(x)dx,\mbox{ and }\int_{0}^{\pi}\sin{\rho x}f(x)dx$$
is about the effective integral supports of these functions. There are comments available in \cite{Titchmarsh}.

\end{proof}

\begin{acknowledgement}

The authors thank Professor Sergey Buterin at Saratov State University for the introduction to frozen argument problem during his visiting stay at Taiwan. The authors also ingenuously appreciate the research funding supported by NSTC under the project number 113-2115-M-131 -001. The content of this manuscript does not necessarily reflect the position or the policy of the administration, and no official endorsement should be inferred, neither.
\end{acknowledgement}

\end{document}